\newcommand{\QQ}{{\cal A}}
\newcommand{\SSS}{{\cal S}}
\newcommand{\TT}{{\cal T}}
\newcommand{\XX}{{\cal X}}
\newcommand{\tw}{{\mathop{\rm tw}}}
\newcommand{\fillin}{{\mathop{\rm fill}}}
\newcommand{\maxatom}{{\mathop{\rm MA}}}
\mathchardef\mhyphen="2D
\begin{document}


\title{\Large A heuristic for listing almost-clique minimal separators of a
graph}
\author{Hisao Tamaki\\
Department of Computer Science, Meiji University\\
Tama, Kawasaki, Japan\\
\tt{hisao.tamaki@gmail.com}}

\date{}

\maketitle







\begin{abstract}
Bodlaender and Koster (Discrete Mathematics 2006) introduced the
notion of almost-clique separators in the context
of computing the treewidth $\tw(G)$ of a given graph $G$.
A separator $S \subseteq V(G)$ of $G$ is an
\emph{almost-clique separator} if
$S \setminus \{v\}$ is a clique of $G$ for some $v \in S$.
$S$ is a \emph{minimal separator} if $S$ has at least two
full components, where a full component of $S$ is a connected component
$C$ of $G \setminus S$ such that $N_G(C) = S$.
They observed that if $S$ is an almost-clique minimal separator of $G$
then $\tw(G \cup K(S)) = \tw(G)$, where $K(S)$ is the complete graph
on a vertex set $S$: in words, filling an almost-clique minimal separator
into a clique does not increase the treewidth.
Based on this observation, they proposed a preprocessing method for
treewidth computation, a fundamental step of which is to
find a preferably maximal set of pairwise non-crossing almost-clique minimal
separators of a graph.

In this paper, we present a heuristic for this step, which is based on the
following empirical observation. For a graph $G$ and
a minimal triangulation $H$ of $G$, let
$\QQ(H, G)$ denote the set of all almost-clique minimal separators of $G$
that are minimal separators of $H$.
Note that since the minimal separators of $H$ are pairwise
non-crossing, so are those in $\QQ(H, G)$. We observe from experiments
that $\QQ(H, G)$ is remarkably close to maximal, especially when the minimal
triangulation $H$ is computed by an algorithm aiming for small treewidth.
For example, consider the 200 graph instances from the exact
treewidth track of PACE 2017 algorithm implementation challenge. For
each instance $G$, we compute a minimal triangulation $H$ of $G$
using a variant of the MMD algorithm due to Berry {\it et al.},
extract $\QQ = \QQ(H, G)$, and then expand $\QQ$ into a maximal set
$\QQ'$ of pairwise non-crossing almost-clique minimal separators of $G$.
Then, we have $\QQ' = \QQ$ for 79 instances,
$|\QQ'| \leq 1.1 \cdot |\QQ|$ for 194 instance, and
$|\QQ'| \leq 1.24 \cdot |\QQ|$ for all the 200 instances.

This observation leads to an efficient
implementation of the preprocessing method proposed by Bodlaender and Koster.
Experiments on instances from PACE 2017 and other sources
show that this implementation is extremely fast and
effective for graphs of practical interest.
\end{abstract}

\newpage
\pagenumbering{arabic}
\pagestyle{plain}
\section{Introduction}
\label{sec:intro}
Treewidth is a graph parameter which
plays an essential role in the graph minor theory 
\cite{robertson1986graph,robertson1995graph,robertson2004graph}
and is an indispensable tool in designing graph algorithms 
(see, for example, a survey \cite{bodlaender2006treewidth}).
See Section~\ref{sec:prelim} for the definition of treewidth and
tree-decompositions. Deciding the treewidth of a given graph is NP-complete
\cite{arnborg1987complexity}, 
but admits a fixed-parameter linear time algorithm \cite{bodlaender1996linear}.

Practical algorithms for treewidth have also been actively studied
\cite{bodlaender2010treewidth,bodlaender2011treewidth,bannach2017jdrasil,tamaki2019positive,
tamaki2019computing,althaus2021ontamaki}, with recent rapid progresses 
stimulated by PACE 2016 and 2017 \cite{dell2018pace} 
algorithm implementation challenges.

The goal of our present work is to re-evaluate a classical approach to
preprocessing for treewidth computation due to Bodlaender and Koster 
\cite{bodlaender2006safe}, namely that of \emph{almost-clique separator
decomposition}, in the modern setting. This approach
was shown to be effective for relatively small graphs by experiments in their
original work but, somewhat surprisingly, no work is found in the literature
that attempts to evaluate the approach on larger graph instances that are
becoming practically tractable for treewidth computation.

Let us first review their approach. Let $\tw(G)$ denote the treewidth 
of a graph $G$. 
A separator $S \subseteq V(G)$ of $G$ is a \emph{minimal
separator} of $G$ if, for at least two connected components $C$ 
of $G[V(G) \setminus S]$, the neighborhood of $C$ is $S$.
Suppose that $G$ has a \emph{clique minimal separator} $S$, 
a minimal separator that is also a clique.
Then, $\tw(G)$ is the maximum of $\tw(G[C \cup
N_G(C)])$ where $C$ ranges over all connected components of 
$G \setminus S$. Thus, a clique minimal separator can be used to reduce
the task of computing treewidth to the tasks on separated parts. 
Given $G$, all clique minimal separators of $G$ can be listed by an
algorithm due to Tarjan \cite{tarjan1985decomposition} in $O(n m)$ time,
where $n$ and $m$ are the numbers of vertices and edges of
$G$ respectively. Indeed, his algorithm constructs 
a \emph{clique-separator decomposition} of $G$, which is
a tree-decomposition $\TT$ of $G$ satisfying the following conditions:
\begin{enumerate}
\item the intersection of every pair of adjacent bags in $\TT$ is a clique
minimal separator of $G$, and
\item for every bag $X$ of $\TT$, $G[X]$ does not contain a
clique separator.
\end{enumerate}
Let $\TT$ be a clique-separator decompositoin of $G$.
Following Tarjan, we call $G[X]$ for each bag $X$ of $\TT$ 
an \emph{atom} of the decomposition. 
Repeatedly applying the reduction described above, we see that 
$\tw(G)$ equals the maximum of $\tw(A)$, where $A$ ranges over all atoms of $\TT$.

A separator $S$ of $G$ is \emph{safe for treewidth} \cite{bodlaender2006safe},
or simply safe for short, if $\tw(G \cup K(S)) = \tw(G)$, where $K(S)$ denotes the complete graph on
vertex set $S$. If we find a minimal separator $S$ that is safe in $G$ then, 
we may treat $S$ as if $S$ is a clique minimal separator and apply the reduction
described above.
A separator $S$ of $G$ is an \emph{almost-clilque separator} of $G$ 
if there is a vertex $v \in S$ such that $S \setminus \{v\}$ is a clique of $G$.
One of the sufficient conditions for $S$ being safe, due to Bodlaener and Koster, is that
$S$ is an almost-clique minimal separator. Based on this observation, they proposed a preprocessing approach to treewidth computation, 
which we formulate as follows.

\begin{algorithm2e}
\caption{Computing treewidth with almost-clique separator based
preprocessing}
\label{alg:tw-acs}
\renewcommand\algorithmicrequire{\textbf{Input:}}
\renewcommand\algorithmicensure{\textbf{Output:}}
\begin{algorithmic}[1]
\REQUIRE{Graph $G$}
\ENSURE{$tw(G)$}
\REPEAT 
\STATE{$\QQ \Leftarrow$ a set of pairwise non-crossing
almost-clique minimal separators of $G$}
\STATE{Update $G$ by filling each $S \in \QQ$ into a clique}
\UNTIL{$G$ becomes unchagend}
\STATE{Construct a clique separator decomposition $\TT$ of $G$}
\STATE{Compute the treewidth of each atom in $\TT$}
\RETURN{the maixmum of $\tw(A)$ over all atoms $A$ of $\TT$}
\end{algorithmic} 
\end{algorithm2e}

We call the clique-separator decomposition $\TT$ of the filled graph,
constructed at line 5 of the algorithm, an \emph{almost-clique separator
decomposition} of the original graph. To compute a set of pairwise non-crossing almost-clique separators of $G$
at line 2, Bodlaender and Koster suggest applying Tarjan's algorithm
\cite{bodlaender2006safe} to list clique separators of $G \setminus \{v\}$ for every $v \in V(G)$. 
We formulate their suggestion as Algorithm~\ref{alg:list-acs}, which we
call the \emph{standard algorithm} for listing almost-clique minimal separators.
\begin{algorithm2e}
\caption{Standard algorithm for listing almost-clique minimal
separators}
\label{alg:list-acs}
\renewcommand\algorithmicrequire{\textbf{Input:}}
\renewcommand\algorithmicensure{\textbf{Output:}}
\begin{algorithmic}[1]
\REQUIRE{Graph $G$}
\ENSURE{A maximal set of pairwise non-crossing almost-clique minimal
separators of $G$}
\STATE{$\QQ \Leftarrow \emptyset$}
\FOR{each $v \in V(G)$}
\STATE{Use Targan's algorithm to compute the set $\SSS_v$ of
all clique separators of $G \setminus \{v\}$}
\FOR{each $S \in \SSS_v$}
\IF{$S \cup \{v\}$ is a minimal separator of $G$ not crossing any
separator in $\QQ$}
\STATE {add $S \cup \{v\}$ to $\QQ$}
\ENDIF
\ENDFOR
\ENDFOR
\RETURN{$\QQ$}
\end{algorithmic} 
\end{algorithm2e}

Although the running time of the standard algorithm is
polynomial, the complexity of $O(n^2 m)$ is too high to be practically
applicable to large graphs. This might be the reason why the authors of modern implementations of treewidth algorithms
\cite{bannach2017jdrasil,tamaki2019positive,althaus2021ontamaki} 
have not adopted the preprocessing method based on almost-clique
separators as proposed by Bodlaender and Koster.
We need a practically faster alternative to the standard algorithm in order to
exploit the potential of the approach.

We present a heuristic algorithm for this task, 
which is based on the following empirical observation.
For a graph $G$ and
a minimal triangulation $H$ of $G$, let $\SSS(H)$ denote the
set of minimal separators of $H$ and 
$\QQ(H, G)$ the set of almost-clique minimal separators of $G$ belonging to 
$\SSS(H)$. Note that, since the separators in $\SSS(H)$ are pairwise
non-crossing, so are those in $\QQ(H, G)$. We observe from experiments that $\QQ(H, G)$ is remarkably close to maximal, especially when the minimal
triangulation $H$ is computed by an algorithm aiming for small treewidth.

For example, consider the 200 graph instances from the exact treewidth track
of PACE 2017 algorithm implementation challenge \cite{dell2018pace}. For 
each instance $G$, we compute a minimal triangulation $H$ of $G$
using a variant, which we call MMAF, of the MMD algorithm due to Berry {\it et
al.} \cite{berry2003minimum}, extract $\QQ = \QQ(H, G)$, and then
expand $\QQ$ into a maximal set $\QQ'$ of pairwise non-crossing almost-clique minimal separators of $G$.
For this expansion, we use the list of all almost-clique minimal separators of
$G$, constructed in advance, and apply a straightforward greedy algorithm.
Then, we have $\QQ' = \QQ$ for 79 instances, $|\QQ'| \leq 1.1 \cdot |\QQ|$ for
194 instances, and $|\QQ'| \leq 1.24 \cdot |\QQ|$ for all the 200 instances.
See Section~\ref{sec:experiment} for more details and
the results when minimal triangulation algorithms other than MMAF are used.

This observation suggests that, at line 2 of Algorithm~\ref{alg:tw-acs}, 
we let $\QQ = \QQ(H, G)$ where $H$ is a minimal triangulation of $G$ computed
by the MMAF algorithm. Although $\QQ$ thus computed is not necessarily maximal,
it is expected to be close to maximal, as the above observation shows. Moreover,
except for the last round of the iteration, the non-maximality may not be a serious drawback since some of the missed
non-crossing almost-clique minimal separators may be discovered in subsequent
rounds. We experimentally compare this implementation of
Algorithm~\ref{alg:tw-acs} with the standard implementation
where Algorithm~\ref{alg:list-acs} is used to compute $\QQ$.

When applied to the 200 graph instances from the exact treewidth track of PACE 2017, 
our implementation is by orders of magnitudes faster than the standard
one. 
Let $\rho_1(G)$ denote the ratio $t_{\rm ours}(G) / t_{\rm standard}(G)$,
where $t_{\rm ours}(G)$ and $t_{\rm standard}(G)$ denote the
time spent on $G$ by our implementation and the standard implementation respectively.
See Section~\ref{sec:experiment} for the experimental environment.
Then, $\rho_1(G) \leq 0.005$ for 156 out of the 200 instances,
$\rho_1(G) \leq 0.05$ for 193 instances,
and $\rho_1(G) \leq 0.12$ for all the instances.

For the quality of decompositions, first consider the following 
measure.
Let $\maxatom_{\rm ours}(G)$ and $\maxatom_{\rm standard}(G)$ denote
the maximum numbers of vertices in an atom of the decomposition 
of $G$ produced by our implementation and by the standard
implementation, respectively, and let $\rho_2(G) = \maxatom_{\rm
ours}(G) / \maxatom_{\rm standard}(G)$.
Then, $\maxatom_{\rm ours}(G) = \maxatom_{\rm standard}(G)$ on
153 instances, 
$\rho_2(G) \leq 1.1$ for 195 instances, and 
$\rho_2(G) \leq 1.5$ for all the 200 instances.

We also consider the time for treewidth computation 
after the preprocessing, where we use our implementation of the treewidth
algorithm due to Tamaki \cite{tamaki2019computing}.
Let $t'_{\rm ours}(G)$ and $t'_{\rm standard}(G)$ denote the
time to compute the treewidth of $G$, given the almost-clique separator 
decomposition of $G$ produced by our implementation and the standard
implementation respectively, and let 
$\rho_3(G) = t'_{\rm ours}(G) / t'_{\rm standard}(G)$.
Then $\rho_3(G) \leq 1.1$ for 168 instances, 
$\rho_3(G) \leq 1.2$ for 184 instances, and
$\rho_3(G) \leq 1.5$ for all the 200 instances.

To summarize, in both measures, the performance of our
implementation is almost equal to that of the standard
implementation for most of the instances and is not
seriously inferior for any of the instances.
See Section~\ref{sec:experiment} for more details and experiments on
other benchmark instances.

The Java source code of the implementations of the algorithms used in our
experiments is available at a Github repository \url{https://github.com/twalgor/tw}.

\section{Concepts and algorithms}
\label{sec:prelim}
\setcounter{subsection}{0}
\subsection{Graph notation}

In this paper, all graphs are undirected and simple, that is, without self-loops
or parallel edges. Let $G$ be a graph. We denote by $V(G)$ the vertex set
of $G$ and by $E(G)$ the edge set of $G$. As $G$ is simple and undirected,
each member of $E(G)$ is a two-member subset of $V(G)$.
The subgraph of $G$ induced by $U \subseteq V(G)$ is denoted by 
$G[U]$. We sometimes use an abbreviation $G \setminus U$ to
stand for $G[V(G) \setminus U]$, where $U \subseteq V(G)$.
A graph $G$ is \emph{complete} if $E(V)$ contains all two-member subsets of
$V(G)$. We denote by $K(U)$ the complete graph on vertex set $U$.
For a graph $G$ and an edge set $F \subseteq E(K(V(G)))$, we deonte by
$G \cup F$ the graph $G'$ with $V(G') = V(G)$ and
$E(G') = E(G) \cup F$.

A vertex set $S \subseteq V(G)$ is a \emph{clique} of $G$ if $G[S]$ 
is a complete graph. A clique $S$ of $G$ is \emph{maximal} if no proper
superset of $S$ is a clique of $G$. In this paper, it is often needed to add
edges to $G$ so that some given vertex set $U \subseteq V(G)$ becomes a clique.
We say that we \emph{fill $U$ into a clique in $G$} in this situation;
the resulting graph is $G \cup K(U)$.
For each $v \in V(G)$, $N_G(v)$ denotes the set of neighbors of $v$ in $G$:
$N_G(v) = \{u \in V(G) \mid \{u, v\} \in E(G)\}$.
For $U \subseteq V(G)$, the {\em neighborhood of $U$ in $G$}, denoted
by $N_G(U)$, is the set of vertices adjacent to some vertex in $U$ but not
belonging to $U$ itself: $N_G(U) = (\bigcup_{v \in U} N_G(v)) \setminus U$.

We say that vertex set $C \subseteq V(G)$ is {\em connected in}
$G$ if, for every $u, v \in C$, there is a path in $G[C]$ between $u$
and $v$. It is a {\em connected component} or simply a {\em component} 
of $G$ if it is connected and is inclusion-maximal subject to this condition.
A vertex set $S \subseteq V(G)$ is a \emph{separator} of $G$
if $G \setminus S$ has more than one component. 
For brevity, we will sometimes refer to those components as \emph{the components
of $S$ in $G$} when our intention is clear 
(that we are not talking of components of $G[S]$).
A component $C$ of $S$ in $G$ is \emph{full} if $N_G(C) = S$.
A separator $S$ of $G$ \emph{separates} two vertices $a$ and $b$ if
$a$ and $b$ belong to distinct components of $S$ in $G$. We also say that
$S$ is an \emph{$a$-$b$ separator} in this situation. $S$ is a \emph{minimal
$a$-$b$ separator} if it is an $a$-$b$ separator and is inclusion-minimal
subject to this condition. $S$ is a \emph{minimal separator} if it is
a minimal $a$-$b$ separator for some pair of vertices $a$ and $b$. 
It is straightforward to see that $S$ is a minimal separator if and only if
$S$ has at least two full compoents in $G$. 
Let $R$ and $S$ be separators of $G$. We say that $R$ \emph{crosses} $S$ if
$R$ separates some pair of vertices in $S$. When $R$ and $S$ are
minimal separators, $R$ crosses $S$ if and only if $S$ crosses $R$.
\subsection{Tree-decompositions}

A \emph{tree-decomposition} of a graph $G$ is a pair $(T, \XX)$ where $T$ is a
tree and $\XX$ is a family $\{X_i\}_{i \in V(T)}$ of vertex sets of $G$, indexed
by the nodes of $T$, such that the following three conditions are satisfied. 
We call each $X_i$ the \emph{bag} at node $i$. 
\begin{enumerate}
\item $\bigcup_{i \in V(T)} X_i = V(G)$.
\item For each edge $\{u, v\} \in E(G)$, there is some $i \in V(T)$
such that $u, v \in X_i$.
\item For each $v \in V(G)$, the set of nodes 
$I_v = \{i \in V(T) \mid v \in X_i\} \subseteq V(T)$ is connected in $T$.
\end{enumerate}
The \emph{width} of this tree-decomposition is $\max_{i \in V(T)} |X_i| - 1$.
The \emph{treewidth} of $G$, denoted by $\tw(G)$ is the smallest $k$ such
that there is a tree-decomposition of $G$ of width $k$.
A tree-decomposition of $G$ is \emph{optimal} if its width equals $\tw(G)$.

The following facts are easy to verify.
\begin{proposition}
\label{prop:clique-bag}
Let $G$ be a graph and $K$ a clique of $G$.
Then, every tree-decomposition of $G$ contains a bag that is a superset of $K$.
\end{proposition}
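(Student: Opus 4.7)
The plan is to invoke the Helly property for subtrees of a tree. For the tree-decomposition $(T, \XX)$ and each $v \in V(G)$, condition 3 of the definition tells us that $I_v = \{i \in V(T) \mid v \in X_i\}$ induces a connected subgraph, i.e.\ a subtree $T_v$ of $T$. The conclusion we want, the existence of a bag $X_i$ with $K \subseteq X_i$, is exactly the statement that $\bigcap_{v \in K} I_v \neq \emptyset$.

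First I would establish that $\{T_v\}_{v \in K}$ is a family of pairwise intersecting subtrees. For any two distinct $u, v \in K$ we have $\{u, v\} \in E(G)$ because $K$ is a clique, so by condition 2 there is some bag containing both $u$ and $v$, which gives $I_u \cap I_v \neq \emptyset$. Then I would apply the Helly property of subtrees of a tree: any finite family of pairwise intersecting subtrees has a common node. Applied to $\{T_v\}_{v \in K}$, this produces a node $i \in V(T)$ with $v \in X_i$ for every $v \in K$, i.e.\ $K \subseteq X_i$, as required.

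The only real content is the Helly property itself, which I would prove by induction on $|K|$. The base case $|K| \leq 2$ is the pairwise intersection hypothesis. For the inductive step, remove one vertex $v_0 \in K$ and apply induction to $\{T_v\}_{v \in K \setminus \{v_0\}}$ to obtain a common node $j$; then consider the unique path in $T$ from $j$ to any chosen node of $T_{v_0}$, and use connectivity of each $T_v$ ($v \neq v_0$), combined with the fact that $T_v \cap T_{v_0} \neq \emptyset$, to argue that the first node of this path lying in $T_{v_0}$ still lies in every $T_v$. This combinatorial argument on the tree is the only place that requires care; everything else is bookkeeping against the three tree-decomposition axioms.
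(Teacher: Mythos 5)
The paper states this proposition without proof, listing it among the facts that are ``easy to verify,'' so there is no in-paper argument to compare against. Your proof via the Helly property for subtrees of a tree is the standard argument for this fact and is correct: pairwise intersection of the subtrees $T_v$, $v \in K$, follows from the clique hypothesis together with condition 2, and the Helly property then yields a node common to all of them, whose bag contains $K$. The one place in your sketch that deserves a touch more care is the inductive step of the Helly property: the node $i$ you extract is the first node of $T_{v_0}$ on the path from $j$ to \emph{one} chosen node of $T_{v_0}$, but what you actually need is that this same $i$ lies on the path from $j$ to the node of $T_{v_0} \cap T_v$ witnessing $T_v \cap T_{v_0} \neq \emptyset$, for every $v \neq v_0$. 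This holds because in a tree the node of a connected subtree closest to $j$ is unique and every path from $j$ into that subtree passes through it; stating that explicitly closes the argument.
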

\begin{corollary}
\label{cor:tw-clique}
For every graph $G$, $\tw(G) \geq \omega(G) - 1$ holds, where
$\omega(G)$ is the clique number of $G$.
\end{corollary}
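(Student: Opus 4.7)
The plan is to deduce the corollary as an almost immediate consequence of Proposition \ref{prop:clique-bag}. First I would pick a clique $K$ of $G$ realizing the clique number, so that $|K| = \omega(G)$; such a $K$ exists by the definition of $\omega(G)$. Next I would take an arbitrary tree-decomposition $(T, \XX)$ of $G$ and apply Proposition \ref{prop:clique-bag} to obtain a node $i \in V(T)$ whose bag $X_i$ satisfies $K \subseteq X_i$. This forces $|X_i| \geq |K| = \omega(G)$, and hence the width $\max_{j \in V(T)} |X_j| - 1$ of $(T, \XX)$ is at least $\omega(G) - 1$. Since $(T, \XX)$ was arbitrary, taking the minimum width over all tree-decompositions of $G$ yields $\tw(G) \geq \omega(G) - 1$, as required.

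There is no real obstacle here: the entire content of the corollary lies in Proposition \ref{prop:clique-bag}, which I am allowed to assume. The only minor points worth a sentence of commentary in a clean write-up are that a maximum clique exists because $G$ is finite (so the supremum defining $\omega(G)$ is attained) and that the argument remains valid for edgeless graphs, where $\omega(G) = 1$ and the bound degenerates to the trivial $\tw(G) \geq 0$.
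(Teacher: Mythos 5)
Your proof is correct and follows exactly the derivation the paper intends: the corollary is stated immediately after Proposition~\ref{prop:clique-bag} as an easy consequence, and your argument (a maximum clique forces a bag of size at least $\omega(G)$ in every tree-decomposition) is the standard one. Nothing further is needed.
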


Let $G$ be a graph and $(T, \XX)$ a tree-decomposition of $G$.
For a pair $i$, $j$ of adjacent nodes in $T$, 
let $T(i, j)$ denote the maximal subtree of $T$ containing $i$ but not $j$.
Furthermore, we define $V_{\TT}(i, j) = 
\bigcup_{t \in V(T(i, j))} X_t \setminus X_i$:
this is the union of the bags in $T(i, j)$ with vertices in $X_i$ removed.

\begin{proposition}
Let $(T, \XX)$ be a tree-decomposition of $G$. For each edge $\{i, j\}$ of
$T$, let $S_{ij}$ denote $X_i \cap X_j$. Then, $S_{ij}$ equals $V_{\TT}(i, j)
\cap V_{\TT}(j, i)$ and is an $a$-$b$ separator for every pair of vertices 
$a \in V_{\TT}(i, j) \setminus V_{\TT}(j, i)$ and 
$b \in V_{\TT}(j, i) \setminus V_{\TT}(i, j)$.
\end{proposition}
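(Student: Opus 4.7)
The plan is to exploit the running intersection property (condition~3) together with the structural fact that removing the edge $\{i,j\}$ partitions $V(T)$ into the two disjoint node sets $V(T(i,j))$ and $V(T(j,i))$. Writing $I_v = \{t \in V(T) \mid v \in X_t\}$, the key tool I will use throughout is the following trichotomy: since $I_v$ is a nonempty connected subtree of $T$, it either lies entirely in $V(T(i,j))$, entirely in $V(T(j,i))$, or meets both sides, in which case $I_v$ must cross the edge $\{i,j\}$ and therefore contains both $i$ and $j$, forcing $v \in X_i \cap X_j = S_{ij}$.

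I would first establish the equality $S_{ij} = V_{\TT}(i,j) \cap V_{\TT}(j,i)$. The inclusion $S_{ij} \subseteq V_{\TT}(i,j) \cap V_{\TT}(j,i)$ is immediate from $i \in V(T(i,j))$, $j \in V(T(j,i))$, and the definition of $V_{\TT}$. For the reverse inclusion, any vertex $v$ in the intersection lies in some bag indexed in $V(T(i,j))$ and in some bag indexed in $V(T(j,i))$, which by the trichotomy places $v$ in $S_{ij}$.

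For the separator assertion, I would fix $a \in V_{\TT}(i,j) \setminus V_{\TT}(j,i)$ and $b \in V_{\TT}(j,i) \setminus V_{\TT}(i,j)$ and suppose for contradiction that there is a path $v_0 v_1 \dots v_k$ in $G \setminus S_{ij}$ with $v_0 = a$ and $v_k = b$. Note first that $a \notin S_{ij}$ since $a \notin V_{\TT}(j,i) \supseteq X_j$, and similarly $b \notin S_{ij}$; applying the first part of the proposition, $I_a$ lies in $V(T(i,j))$ and $I_b$ lies in $V(T(j,i))$. Every interior $v_\ell$ also avoids $S_{ij}$ by hypothesis, so the trichotomy places each $I_{v_\ell}$ entirely on one of the two sides. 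Condition~2 forces each consecutive pair $I_{v_\ell}, I_{v_{\ell+1}}$ to share at least one node (a bag containing the edge $\{v_\ell, v_{\ell+1}\}$), hence to lie on the same side. Induction along the path then contradicts $a$ and $b$ lying on opposite sides.

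The only real content is the trichotomy about how a connected subtree $I_v$ can interact with the cut edge $\{i,j\}$; once that is stated cleanly, both halves of the proposition reduce to direct chases of the three tree-decomposition axioms, and I anticipate no obstacle beyond careful bookkeeping of the cases.
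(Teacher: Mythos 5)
The paper states this proposition without proof --- it is offered as a standard, easy-to-verify fact about tree-decompositions --- so there is no proof of record to compare yours against. Judged on its own, your argument is correct and is the classical one: the trichotomy for the connected subtree $I_v$ with respect to the cut edge $\{i,j\}$ (entirely in $V(T(i,j))$, entirely in $V(T(j,i))$, or containing both $i$ and $j$ and hence $v \in S_{ij}$) is exactly the right tool, and both the equality $S_{ij} = V_{\TT}(i,j)\cap V_{\TT}(j,i)$ and the separation claim follow from it as you describe. The induction along a hypothetical $a$-$b$ path in $G\setminus S_{ij}$, using condition~2 to force the consecutive subtrees $I_{v_\ell}$ and $I_{v_{\ell+1}}$ to share a node and hence lie on the same side, is airtight.

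One caveat worth flagging: your proof silently uses $X_i \subseteq V_{\TT}(i,j)$, for instance when you deduce $S_{ij}\subseteq V_{\TT}(i,j)$ from $i\in V(T(i,j))$, and when you deduce $a\notin S_{ij}$ from $X_j \subseteq V_{\TT}(j,i)$. That is incompatible with the paper's literal definition $V_{\TT}(i,j) = \bigl(\bigcup_{t\in V(T(i,j))} X_t\bigr)\setminus X_i$, under which $V_{\TT}(i,j)$ is disjoint from $X_i$; with that reading the intersection $V_{\TT}(i,j)\cap V_{\TT}(j,i)$ would be empty and the stated equality would fail whenever $S_{ij}\neq\emptyset$. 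The definition you are implicitly working with, namely $V_{\TT}(i,j)=\bigcup_{t\in V(T(i,j))} X_t$, is evidently the intended one, since it is the only reading under which the proposition holds, but you should say explicitly that you are using it rather than the definition as printed.
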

We say that each edge $\{i, j\}$ of this
tree-decomposition \emph{induces} separator $S_{ij}$.
We say that tree-decomposition $(T, \XX)$ \emph{induces} separator $S$
if some edge of $T$ induces $S$. 

\subsection{Chordal graphs and triangulations}

Tree-decompositions of graphs are closely related to 
\emph{triangulations} of graphs, defined as follows.
Let $G$ be a graph and $C$ a cycle in $G$. 
An edge $\{u, v\}$ of $G$ is a \emph{chord} of
$C$ if $u, v \in V(C)$ but $\{u, v\} \not\in E(C)$.
A graph $G$ is \emph{chordal} if every cycle $C$ of $G$
with $|V(C)| > 3$ has a chord. 
A vertex $v$ in $G$ is a \emph{simplicial vertex} of $G$
if $N_G(v)$ is a clique. A total ordering $v_1$, $v_2$, \ldots, $v_n$
of $V(G)$ is a \emph{perfect elimination order} of $G$ 
if $v_i$ is a simplicial vertex of $G[\{v_i, v_{i + 1}, \ldots, v_n\}]$
for $1 \leq i \leq n$. The following characterization of 
chordal graphs due to \cite{fulkerson1965incidence} is fundamental. 
\begin{proposition}
\label{prop:simplicial}
A graph $G$ is chordal if and only if it has a perfect elimination order.
\end{proposition}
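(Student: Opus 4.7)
The plan is to prove the two directions separately. The backward direction is direct, while the forward direction reduces to the classical fact that every chordal graph has a simplicial vertex.

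For the direction that a perfect elimination order implies chordality, suppose $v_1, v_2, \ldots, v_n$ is a perfect elimination order of $G$ and, for contradiction, that $G$ contains a chordless cycle $C$ with $|V(C)| \geq 4$. Let $v_i$ be the vertex of $V(C)$ with smallest index in the ordering, and let $u, w$ be its two neighbors on $C$. Both $u$ and $w$ lie in $\{v_{i+1}, \ldots, v_n\}$ by the choice of $v_i$, so both belong to $N_G(v_i)$ within $G[\{v_i, \ldots, v_n\}]$. Since $v_i$ is simplicial in that subgraph, $\{u, w\} \in E(G)$, yielding a chord of $C$ and contradicting the hypothesis.

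For the forward direction, I would induct on $n = |V(G)|$, with the base case trivial. The key claim for the inductive step is that every chordal graph on at least one vertex has a simplicial vertex $v$. Granted this, one observes that $G \setminus \{v\}$ is chordal (any chordless long cycle in an induced subgraph would be a chordless long cycle in $G$), so by induction it has a perfect elimination order $v_2, \ldots, v_n$, and prepending $v$ yields the desired order for $G$.

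The main obstacle is thus the existence of a simplicial vertex. I would prove by induction on $|V(G)|$ the slightly stronger statement that every non-complete chordal graph has two non-adjacent simplicial vertices; for complete $G$, any vertex is simplicial. For the inductive step, pick non-adjacent $a, b \in V(G)$ and let $S$ be a minimal $a$-$b$ separator. A crucial sublemma is that $S$ is a clique of $G$: if $x, y \in S$ were non-adjacent, then using that both $C_a$ and $C_b$ (the components of $a$ and $b$ in $G \setminus S$) are full components of $S$, one chooses shortest paths from $x$ to $y$ through $C_a$ and through $C_b$ respectively, and splices them into an induced cycle of length at least four, contradicting chordality. Now apply the inductive hypothesis to the chordal graph $G[C_a \cup S]$: either it is complete, in which case any neighbor of $a$ in $S$ together with $a$ shows that $a$ is simplicial in $G$, or it contains two non-adjacent simplicial vertices, at least one of which lies outside the clique $S$ and is therefore a simplicial vertex of $G$ lying in $C_a$. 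The symmetric argument on $G[C_b \cup S]$ produces a second simplicial vertex of $G$ in $C_b$, which is non-adjacent to the first since no edges run between $C_a$ and $C_b$.
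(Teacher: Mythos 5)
Your proof is correct. Note that the paper itself offers no proof of Proposition~\ref{prop:simplicial}: it is stated as a known fact with a citation to Fulkerson and Gross, so there is no in-paper argument to compare against. What you have reconstructed is the classical proof: the easy direction via the earliest cycle vertex in the elimination order, and the hard direction via Dirac's lemma that every non-complete chordal graph has two non-adjacent simplicial vertices, which in turn rests on the fact that minimal separators of chordal graphs are cliques. All the steps check out: the spliced cycle in the sublemma is indeed induced because the two paths are chosen of minimum length within their respective full components and no edges run between distinct components of $G \setminus S$; and a simplicial vertex of $G[C_a \cup S]$ lying in $C_a$ is simplicial in $G$ because $N_G(v) \subseteq C_a \cup S$ for every $v \in C_a$. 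One phrase is slightly loose: in the case where $G[C_a \cup S]$ is complete, the right justification is that $N_G(a) \subseteq C_a \cup S$ is a clique (not merely that $a$ has a neighbor in $S$), but this is clearly what you intend and the argument stands.
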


A graph $H$ is a 
\emph{triangulation} of a graph $G$ if it is chordal,
$V(H) = V(G)$, and $E(H) \supseteq E(G)$:
it is a \emph{minimal triangulation} of $G$ if, 
furthermore, its edge set is inclusion-minimal
subject to this condition.
For a graph $G$ and a tree-decomposition $\TT = (T, \{X_i\})$ of 
$G$, let $\fillin(G, \TT)$ denote the
graph $G \cup \bigcup_{i \in V(T)} K(X_i)$ obtained by
filling every bag of $\TT$ into a clique. 

The following facts are known. (See \cite{heggernes2006minimal} for example).
\begin{proposition}
If $G$ is chordal, then there is a tree-decomposition $\TT$
of $G$ in which every bag is a maximal clique of $G$.
For every such $\TT$, the set of separators induced by edges of $\TT$ is the set of all minimal separators of $G$.
\end{proposition}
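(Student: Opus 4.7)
The plan is to establish the two claims separately. For the existence statement I would proceed by induction on $|V(G)|$ driven by Proposition~\ref{prop:simplicial}. Pick a simplicial vertex $v$ of $G$ and set $K = N_G(v) \cup \{v\}$; because any clique properly containing $K$ would have to include a vertex adjacent to $v$ but outside $N_G(v)$, the set $K$ is automatically a maximal clique of $G$. The graph $G \setminus \{v\}$ is again chordal, so by induction it has a tree-decomposition $(T', \XX')$ whose bags are exactly its maximal cliques. If $N_G(v)$ is itself a bag of $T'$, I would replace that bag by $K$; otherwise I would pick any bag containing $N_G(v)$ (guaranteed by Proposition~\ref{prop:clique-bag}) and hang a new leaf bag $K$ off it. Checking the three tree-decomposition axioms and confirming that the previously existing bags remain maximal cliques of $G$ after the modification is routine.

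For the direction that every induced separator is a minimal separator of $G$, take an edge $\{i,j\}$ of $T$ and set $S = X_i \cap X_j$. Since $X_i$ and $X_j$ are distinct maximal cliques, there exist $a \in X_i \setminus X_j$ and $b \in X_j \setminus X_i$, and the tree-decomposition axioms guarantee that $S$ is an $a$-$b$ separator. The components of $a$ and $b$ in $G \setminus S$ are both full: $X_i$ is a clique containing $\{a\} \cup S$, so every vertex of $S$ is a neighbor of $a$ and hence lies in the neighborhood of the component of $a$, and symmetrically on the $b$ side. Thus $S$ has at least two full components and is a minimal separator of $G$.

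The main work lies in the converse: every minimal separator $S$ arises as $X_i \cap X_j$ for some edge of $T$. I would first invoke the standard chord-of-a-cycle argument to show that $S$ is itself a clique: given $u, v \in S$ and two full components $C_1, C_2$ of $S$, build shortest $u$-$v$ paths through $C_1$ and through $C_2$; the resulting cycle has length at least four, and any chord forced by chordality must coincide with $\{u, v\}$. Next, for each full component $C_k$ of $S$, define $B(C_k) = \{t \in V(T) : X_t \cap C_k \neq \emptyset\}$, which is a subtree by the running-intersection property applied along a spanning tree of $C_k$. The key observation is that every $X_t$ with $t \in B(C_k)$ satisfies $X_t \subseteq C_k \cup S$: $X_t$ is a clique that meets $C_k$, so every other vertex of $X_t$ is a neighbor of some vertex of $C_k$ and therefore lies in $C_k \cup N_G(C_k) = C_k \cup S$. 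In particular, $B(C_1)$ and $B(C_2)$ are disjoint.

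Finally, take the unique path $u_0, u_1, \ldots, u_k$ in $T$ from $B(C_1)$ to $B(C_2)$, with $u_0 \in B(C_1)$ and $u_k \in B(C_2)$, and target the edge $\{u_0, u_1\}$. Each $s \in S$ has a neighbor in $C_1$ and a neighbor in $C_2$, so $s$ belongs to at least one bag in each of $B(C_1)$ and $B(C_2)$; the running-intersection property then places $s$ in every bag on the path joining those two bags, in particular in $X_{u_0}$ and $X_{u_1}$, giving $S \subseteq X_{u_0} \cap X_{u_1}$. For the reverse inclusion I would combine $X_{u_0} \subseteq C_1 \cup S$ with $X_{u_1} \cap C_1 = \emptyset$, which holds either because $u_1 \notin B(C_1)$ when $k > 1$, or because $X_{u_1} \subseteq C_2 \cup S$ when $k = 1$. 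The step I expect to demand the most care is this final case split: one must keep track of exactly which of the subtrees $B(C_1)$ and $B(C_2)$ contains each endpoint of the selected edge, and the two subcases must be handled separately to rule out stray vertices from other components sneaking into $X_{u_0} \cap X_{u_1}$.
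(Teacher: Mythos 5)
Your argument is correct and complete. Be aware, though, that the paper does not prove this proposition at all: it appears under ``The following facts are known'' with a pointer to the survey \cite{heggernes2006minimal}, so there is no in-paper proof to compare against. What you have reconstructed is the standard clique-tree argument for chordal graphs, and all three pieces check out: the inductive construction on a simplicial vertex via Proposition~\ref{prop:simplicial} (correctly noting that $N_G(v)\cup\{v\}$ is automatically a maximal clique and splitting on whether $N_G(v)$ is itself a maximal clique of $G\setminus\{v\}$); the minimality of each induced separator via the two full components supplied by the two adjacent, mutually incomparable maximal-clique bags; and the identification of an arbitrary minimal separator $S$ with the intersection across the first edge of the path joining the disjoint subtrees $B(C_1)$ and $B(C_2)$. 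The $k=1$ versus $k>1$ case split you single out is indeed the only delicate point, and you resolve it correctly. Two minor remarks. First, the paragraph establishing that $S$ is a clique is never used afterwards --- your containment arguments need only that the bags are cliques and that $C_1$ and $C_2$ are full --- so it can be deleted. Second, the proposition's second sentence, read literally, quantifies over all decompositions whose bags are maximal cliques, which would include degenerate ones where two adjacent nodes carry the same bag; your proof (like every standard treatment) implicitly assumes adjacent bags are distinct, which your own construction does guarantee but which deserves a word if the universal claim is taken at face value.
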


Because of this fact, minimal triangulation algorithms can be
regarded as algorithms for tree-decomposition.
Indeed, due to the following additional fact and
Corollary~\ref{cor:tw-clique}, 
$\tw(G)$ equals the smallest $k$ such that
there is a minimal triangulation with the clique number $k + 1$.
\begin{proposition}
For every graph $G$ and every tree-decomposition $\TT$ of $G$,
$\fillin(G, \TT)$ is a triangulation of $G$.
\end{proposition}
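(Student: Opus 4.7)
The plan is to produce a perfect elimination order of $H = \fillin(G, \TT)$ and invoke Proposition~\ref{prop:simplicial}. I would proceed by induction on the number of nodes of $T$. The base case $|V(T)| = 1$ is immediate: the unique bag must equal $V(G)$ by condition~1, so $H = K(V(G))$ is complete and hence chordal.

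For the inductive step, pick a leaf $\ell$ of $T$ with unique neighbor $p$, and set $Y = X_\ell \setminus X_p$. The key observation is that every $v \in Y$ is a simplicial vertex of $H$. Indeed, by the connectivity condition, the set $I_v$ of bags containing $v$ induces a subtree of $T$; since $v \in X_\ell$ but $v \notin X_p$, this subtree is contained in the component of $T$ obtained by deleting the edge $\{\ell, p\}$ that contains $\ell$, which is $\{\ell\}$ because $\ell$ is a leaf. Hence $v$ appears only in $X_\ell$, so every neighbor of $v$ in $H$ lies in $X_\ell \setminus \{v\}$. Since $X_\ell$ was filled into a clique in forming $H$, $N_H(v)$ is a clique of $H$, as claimed.

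To close the induction, I would reduce to a smaller tree-decomposition. If $Y = \emptyset$ then $X_\ell \subseteq X_p$, so deleting $\ell$ from $T$ gives a tree-decomposition $\TT''$ of $G$ with $\fillin(G, \TT'') = \fillin(G, \TT) = H$ (the bag $X_\ell$ contributes no new fill edges beyond those from $X_p$), and the inductive hypothesis applies directly. If $Y \neq \emptyset$, let $\TT''$ be the tree-decomposition of $G \setminus Y$ obtained by deleting the leaf $\ell$; this is valid because $X_\ell \setminus Y = X_\ell \cap X_p \subseteq X_p$, so removing $\ell$ preserves conditions~1 and~2 for $G \setminus Y$, while condition~3 is inherited. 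A straightforward check gives $H \setminus Y = \fillin(G \setminus Y, \TT'')$, since $Y$ is disjoint from every bag other than $X_\ell$. By induction, $H \setminus Y$ is chordal; taking a perfect elimination order of $H \setminus Y$ and prepending any enumeration of $Y$ produces a perfect elimination order of $H$.

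The only delicate point is the verification that the vertices of $Y$ are simplicial, which rests entirely on the subtree connectivity condition; the rest of the argument is routine bookkeeping to ensure the reduced decomposition really is a tree-decomposition of the reduced graph and yields the same filled graph minus $Y$.
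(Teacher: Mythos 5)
Your proof is correct. The paper states this proposition as a known fact and gives no proof of its own, so there is nothing to compare against; your argument --- leaf-based induction on $|V(T)|$, showing the vertices of $X_\ell \setminus X_p$ are simplicial in $\fillin(G,\TT)$ and assembling a perfect elimination order via Proposition~\ref{prop:simplicial} --- is the standard one and all the steps check out. The only point worth making explicit is that ``every neighbor of $v$ in $H$ lies in $X_\ell$'' uses condition~2 of the tree-decomposition definition (an edge of $G$ incident to $v$ must be covered by some bag containing $v$, which can only be $X_\ell$), in addition to the fact that fill edges at $v$ come only from $K(X_\ell)$.
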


The following facts are already used in the introduction to 
reason about minimal separators of a graph $G$ obtained from a
minimal triangulation of $G$. 

\begin{proposition}
\label{prop:chordal-noncrossing}
If $G$ is chordal, then no pair of minimal separators $S_1$ and $S_2$
of $G$ cross each other.
\end{proposition}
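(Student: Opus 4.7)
The plan is to argue by contradiction: assume $G$ is chordal and yet $S_1$ and $S_2$ are minimal separators of $G$ that cross each other, and then exhibit a chordless cycle in $G$ of length at least four, contradicting chordality.

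By the definition of crossing, I can pick $a, b \in S_2$ lying in distinct components of $G \setminus S_1$. In particular $a, b \notin S_1$. A useful observation, which will be the crux of the argument, is that $\{a,b\} \notin E(G)$: otherwise $a$ and $b$ would be joined by an edge lying entirely inside $G \setminus S_1$ and could not sit in distinct components of that graph. Minimality of $S_2$ provides two full components $C$ and $C'$ of $G \setminus S_2$, so that $N_G(C) = N_G(C') = S_2$; in particular both $a$ and $b$ have neighbors in each of $C$ and $C'$.

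I would then take a shortest $a$-$b$ path $P$ in $G[C \cup \{a,b\}]$ and a shortest $a$-$b$ path $P'$ in $G[C' \cup \{a,b\}]$. Each has length at least two, since $\{a,b\}$ is not an edge, and all its internal vertices lie inside $C$, respectively $C'$. Concatenating $P$ with the reverse of $P'$ yields a cycle $K$ of length at least four in $G$. It remains to verify that $K$ is chordless: chords internal to $P$ or to $P'$ are excluded because a shortest path in an induced subgraph of $G$ is an induced path of $G$; a chord joining an internal vertex of $P$ to an internal vertex of $P'$ would connect $C$ and $C'$, impossible since they are distinct components of $G \setminus S_2$; and the remaining possibility, the edge $\{a,b\}$ itself, has been ruled out above.

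The main obstacle is the innocuous-looking but essential observation that $\{a,b\} \notin E(G)$. Without it the cycle $K$ could have $\{a,b\}$ as a chord and the argument would collapse; the observation is made available exactly by choosing $a$ and $b$ to be separated by $S_1$ and noting that they therefore lie outside $S_1$. Once this is in place, the remaining chord analysis is a routine case check, and the resulting chordless cycle of length at least four contradicts the chordality of $G$.
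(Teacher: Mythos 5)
Your proof is correct. The paper does not actually supply a proof of Proposition~\ref{prop:chordal-noncrossing} (it is cited as a known fact), and your argument is the standard one: two internally disjoint induced $a$-$b$ paths through two full components of $S_2$, glued into a cycle that is chordless because $\{a,b\}\notin E(G)$ (as $S_1$ separates $a$ from $b$) and no edge joins distinct components of $G\setminus S_2$.
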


\begin{proposition}
\label{prop:chordal-minimal}
If $H$ is a minimal triangulation of $G$, then every
minimal separator of $H$ is a minimal separator of $G$. 
\end{proposition}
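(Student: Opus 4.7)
Let $S$ be a minimal separator of $H$, and let $C_1, C_2$ be two full components of $S$ in $H$, so $N_H(C_1) = N_H(C_2) = S$. The plan is to show that $S$ is a minimal separator of $G$ by exhibiting two full components of $S$ in $G$, one inside $C_1$ and one inside $C_2$.

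First I would verify the easy half: $S$ still separates $C_1$ from $C_2$ in $G$. Since $E(G) \subseteq E(H)$, every $G$-path is also an $H$-path, so no $G$-path from $C_1$ to $C_2$ can avoid $S$. In particular, every component of $G[V(G) \setminus S]$ is entirely contained in one of the $C_i$, and it suffices to prove that each $C_i$ contains at least one full component of $S$ in $G$.

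The main step is the fullness claim, which I would establish by contradiction. Suppose, say, that every component $D$ of $G[C_1]$ satisfies $N_G(D) \subsetneq S$. Using the minimality of $H$ as a triangulation, the strategy is to exhibit an edge $e \in E(H) \setminus E(G)$ such that $H - e$ is still chordal (and still contains $G$), contradicting the minimality of $H$. To locate such an $e$, fix a vertex $v \in S$ and a component $D$ of $G[C_1]$ with $v \notin N_G(D)$; since $v \in N_H(C_1)$, $v$ must have an $H$-neighbor $u \in C_1$, and by the assumption any such edge $\{v,u\}$ is necessarily a fill edge. A careful choice of $e = \{v,u\}$ among these fill edges, combined with the fact that $S$ is a clique of $H$ (Dirac's theorem on chordal graphs), allows one to verify that $H - e$ remains chordal: any would-be chordless cycle introduced by the removal of $e$ must traverse $\{u,v\}$, and the non-fullness of $D$ together with the clique structure on $S$ permits the cycle to be re-chorded within $S$.

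The main obstacle is exactly this chordality verification, which requires careful handling of the chordless cycles that might appear in $H - e$ and of the case where $v$'s only $H$-neighbors in $C_1$ lie in the "wrong" $G[C_1]$-components. A cleaner route that sidesteps the case analysis is to invoke the Parra--Scheffler correspondence between minimal triangulations of $G$ and maximal pairwise non-crossing families of minimal separators of $G$: under this correspondence, the minimal separators of $H$ are exactly the members of the non-crossing family associated with $H$, each of which is by construction a minimal separator of $G$, so the proposition becomes immediate.
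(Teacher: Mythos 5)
The paper does not actually prove this proposition: it is stated as a known fact with a pointer to the Heggernes survey, where it appears as part of the Parra--Scheffler correspondence between minimal triangulations of $G$ and maximal sets of pairwise non-crossing minimal separators of $G$. So your closing move --- invoke that correspondence, under which the minimal separators of $H$ are exactly the members of the associated family and hence minimal separators of $G$ --- lands on precisely the justification the paper itself relies on, and as a citation it is adequate. Your setup is also correct: since $E(G) \subseteq E(H)$, every component of $G \setminus S$ is contained in a component of $H \setminus S$, so the whole statement reduces to showing that each full $H$-component $C_i$ contains a full $G$-component of $S$.

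The direct argument you sketch for that reduction, however, has a genuine gap exactly where you flag one, and it is worth being concrete about why it does not close as written. First, the contradiction hypothesis only gives, for each component $D$ of $G[C_1]$, some vertex $v_D \in S \setminus N_G(D)$; these vertices may differ from component to component, there need not be a single $v \in S$ missed by all of $C_1$, and when $G[C_1]$ is disconnected the fill edge you must delete may lie \emph{inside} $C_1$ (joining two $G$-components of $C_1$) rather than between $S$ and $C_1$. Second, and more importantly, for chordal $H$ the graph $H - e$ with $e = \{u,v\}$ is chordal if and only if $e$ is not the unique chord of a $4$-cycle of $H$, i.e.\ if and only if $N_H(u) \cap N_H(v)$ is a clique of $H$; for an arbitrary fill edge with $v \in S$ and $u \in C_1$ this common neighborhood can contain a non-adjacent pair $x \in S$, $y \in C_1$, so ``re-chording within $S$'' does not dispose of the problematic $4$-cycles. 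Making the ``careful choice of $e$'' that avoids all such pairs is essentially the entire content of the theorem --- it is where the Rose--Tarjan--Lueker unique-chord characterization and the Parra--Scheffler machinery do their work --- so the sketch asserts rather than proves the key step. In short: keep the citation (it matches the paper's own level of justification), or be prepared to reproduce that machinery; the elementary route needs substantially more than what is written.
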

\begin{corollary}
If $H$ is a minimal triangulation of $G$, then the set of all minimal
separators of $H$ is a maximal set of mutually non-crossing 
minimal separators of $G$.
\end{corollary}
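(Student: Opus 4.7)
The non-crossing half of the statement is immediate from the two preceding propositions: Proposition~\ref{prop:chordal-noncrossing} gives that the separators in $\SSS(H)$ are pairwise non-crossing, and Proposition~\ref{prop:chordal-minimal} upgrades each of them to a minimal separator of $G$. Since whether two minimal separators cross depends only on which pairs of vertices they separate in $G$, and every minimal separator of $H$ is already a minimal separator of $G$, the non-crossing property transfers cleanly to the ambient graph. Thus the real content of the corollary is the maximality claim, on which I would concentrate.

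I would argue maximality by contradiction. Suppose $S$ is a minimal separator of $G$ with $S \notin \SSS(H)$ that crosses no element of $\SSS(H)$, so that $\mathcal{F} = \SSS(H) \cup \{S\}$ is a pairwise non-crossing family of minimal separators of $G$. Extend $\mathcal{F}$ to a maximal pairwise non-crossing family $\mathcal{F}'$ of minimal separators of $G$, and form the graph $H^{*} = G \cup \bigcup_{T \in \mathcal{F}'} K(T)$ by filling every separator in $\mathcal{F}'$ into a clique.

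At this point the plan invokes two classical facts about minimal triangulations, both presented in the survey~\cite{heggernes2006minimal} that the paper already references. First, for any maximal pairwise non-crossing family $\mathcal{F}'$ of minimal separators of $G$, the fill-in $H^{*}$ just constructed is itself a minimal triangulation of $G$, and its set of minimal separators is exactly $\mathcal{F}'$. Second, every minimal triangulation is recovered as the fill-in of its own minimal separators, so in particular $E(H) = E(G) \cup \bigcup_{T \in \SSS(H)} E(K(T))$. Combining these, $E(H) \subseteq E(H^{*})$; since $H$ is already a chordal supergraph of $G$ and $H^{*}$ is a \emph{minimal} triangulation of $G$, the inclusion forces $H = H^{*}$, whence $\SSS(H) = \SSS(H^{*}) = \mathcal{F}' \ni S$, contradicting $S \notin \SSS(H)$.

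The main obstacle is the reliance on the correspondence $\mathcal{F}' \mapsto H^{*}$ between maximal non-crossing families of minimal separators of $G$ and minimal triangulations of $G$. A fully self-contained proof would need to verify that $H^{*}$ is chordal (the crux: any chordless cycle of length at least four in $H^{*}$ can be translated into a pair of crossing separators drawn from $\mathcal{F}'$, contradicting the non-crossing assumption) and that filling at $\mathcal{F}'$ introduces no new minimal separator of $H^{*}$ beyond $\mathcal{F}'$ itself. Both arguments are standard Parra--Scheffler-style case analyses; in the plan I would cite \cite{heggernes2006minimal} rather than rederive them, since the emphasis of the paper is algorithmic rather than structural.
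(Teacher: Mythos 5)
Your argument is correct and is exactly the standard Parra--Scheffler route that the paper implicitly relies on: the corollary is stated without proof, deferring to the two preceding propositions and to the theory surveyed in \cite{heggernes2006minimal}, which is precisely the correspondence between maximal pairwise non-crossing families of minimal separators of $G$ and minimal triangulations of $G$ that you invoke for the maximality half. The only step stated a little too quickly is the transfer of the non-crossing property from $H$ to $G$ (a priori, separation in $H$ and in $G$ need not coincide, since $H$ has extra edges), but this also follows from the same classical facts, because a minimal separator $S$ of a minimal triangulation $H$ of $G$ has the same components in $H \setminus S$ as in $G \setminus S$.
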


\subsection{Minors}

Let $G$ be a graph and $e = \{u, v\}$ an edge of $G$. The contraction
of $e$ in $G$ is an operation to turn $G$ into a graph $G'$ in which
$u$ and $v$ are replaced by a vertex $w$ with
$N_{G'}(w) = N_G(\{u, v\})$. This vertex $w$ may be chosen to
be $u$, $v$, or any vertex not in $G$.
A graph $H$ is a \emph{minor} of $G$ if
it is obtained by a sequence of zero or more
edge contractions, vertex deletions, and edge deletions.
Let $\TT$ be a tree-decomposition of $G$. If we apply any of
these three operations to $G$ and obtain $G'$, 
$\TT$ is straightforwardly converted into a tree-decomposition
$\TT'$ of $G'$ (by replacing $u$ and/or $v$ in each bag with $w$ in case of
contracting $\{u, v\}$ into $w$) with width not larger than that of $\TT$.
Therefore, we have the following: 

\begin{proposition}
If $H$ is a minor of $G$ then $\tw(H) \leq \tw(G)$.
\end{proposition}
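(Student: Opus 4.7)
The plan is to prove the proposition by induction on the length of the sequence of elementary minor operations used to obtain $H$ from $G$. The base case, zero operations, is immediate since $\tw(H) = \tw(G)$. For the inductive step, it suffices to show that for any single minor operation applied to a graph $G'$ to produce $G''$, one has $\tw(G'') \leq \tw(G')$. Fix an optimal tree-decomposition $\TT = (T, \{X_i\}_{i \in V(T)})$ of $G'$ of width $\tw(G')$; the strategy is to construct, for each of the three elementary operations, a corresponding tree-decomposition $\TT'$ of $G''$ of width not exceeding the width of $\TT$.

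For edge deletion, leave $\TT$ unchanged. Conditions 1 and 3 are unaffected, and condition 2 is only weakened, since fewer edges need to be covered. The width is unchanged. For vertex deletion of $v$, replace each bag $X_i$ by $X'_i = X_i \setminus \{v\}$. Coverage of $V(G') \setminus \{v\}$ and of the remaining edges is preserved, and for every $u \neq v$ the index set $I_u$ is the same as before, hence connected in $T$. Each bag can only shrink.

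The contraction case is the one requiring the most care. Suppose $e = \{u, v\} \in E(G')$ is contracted into a new vertex $w$. Define $X'_i = (X_i \setminus \{u, v\}) \cup \{w\}$ if $X_i \cap \{u, v\} \neq \emptyset$, and $X'_i = X_i$ otherwise. I would verify the three conditions as follows. Condition 1: $w$ appears in at least one bag because $u$ does. Condition 2: any edge of $G''$ not incident to $w$ comes from an edge of $G'$ and is covered by inheritance; an edge $\{w, x\}$ arises from some $\{u, x\}$ or $\{v, x\} \in E(G')$, whose covering bag in $\TT$ becomes a bag in $\TT'$ containing both $w$ and $x$. Condition 3: for $y \notin \{u, v\}$ the set $I_y$ is unchanged, while $I_w = I_u \cup I_v$; since condition 2 applied to the edge $\{u,v\}$ in $G'$ forces some bag of $\TT$ to contain both $u$ and $v$, the sets $I_u$ and $I_v$ intersect, and as each is connected in $T$ so is their union. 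Finally, $|X'_i| \leq |X_i|$ for every $i$, with strict inequality exactly when $X_i$ contained both $u$ and $v$, so the width does not increase.

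The only point that is not purely mechanical is precisely this width bound in the contraction case, which hinges on the observation that when both endpoints of the contracted edge happen to lie in a common bag, the replacement by a single vertex $w$ strictly shrinks that bag rather than enlarging it. Combined with the connectedness argument for $I_w$, which in turn relies on the edge $\{u,v\}$ itself being covered by some bag, this gives the inductive step and hence the proposition.
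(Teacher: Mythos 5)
Your proof is correct and is exactly the argument the paper sketches: the paper's one-line justification converts a tree-decomposition of $G$ into one of the minor by replacing $u$ and/or $v$ with $w$ in each bag (for contraction) and observes the width cannot increase, and your write-up simply fills in the routine verification of the three tree-decomposition conditions plus the induction on the number of operations. No discrepancy to report.
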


A minor is a \emph{clique minor} if it is a complete graph.
Let $R$ be a vertex set of $G$. A minor 
$M$ of $G$ is \emph{rooted on $R$} if $V(M) = R$ and 
each contraction in the sequence that leads from $G$ to $M$ is always
on an edge between some $u \in V(M)$ and some $v \not\in V(M)$, 
with $u$ chosen to be the vertex into which this edge is contracted.

\subsection{Clique separator decompositoins}

A separator $S$ of $G$ is a \emph{clique separator} 
if it is a clique of $G$. The following well-known fact follows immediately from 
Proposition~\ref{prop:clique-bag}.

\begin{proposition}
\label{prop:clique-sep}
Let $S$ be a cliuqe separator of $G$. Then $\tw(G)$ is
the larger of $|S| - 1$ and 
the maximum of $\tw(G[C \cup N_G(C)])$ over all components $C$ of $S$.
\end{proposition}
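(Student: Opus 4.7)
The plan is to establish the equality $\tw(G) = \max(|S|-1, \max_C \tw(G[C \cup N_G(C)]))$ by proving the two inequalities separately. For the lower bound, I would use two earlier facts: since $S$ is a clique of $G$, Corollary~\ref{cor:tw-clique} immediately gives $\tw(G) \geq |S| - 1$, and for each component $C$ of $S$ in $G$ the induced subgraph $G[C \cup N_G(C)]$ is obtained from $G$ by a sequence of vertex deletions and is therefore a minor of $G$, so the monotonicity of treewidth under minors yields $\tw(G) \geq \tw(G[C \cup N_G(C)])$. Taking the maximum of these lower bounds gives the direction $\geq$.

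For the upper bound, set $k$ to be the right-hand side of the claimed equality and construct a tree-decomposition of $G$ of width at most $k$ by gluing together optimal decompositions of the parts. Concretely, for each component $C$ of $S$ in $G$ I fix an optimal tree-decomposition $\TT_C = (T_C, \XX_C)$ of $G[C \cup N_G(C)]$, which has width at most $k$. Since $N_G(C) \subseteq S$ is a clique of $G[C \cup N_G(C)]$, Proposition~\ref{prop:clique-bag} supplies a node $r_C$ of $T_C$ whose bag contains $N_G(C)$. I then introduce a fresh node $r$ with bag $S$ and, for each component $C$, attach $T_C$ to $r$ by adding the edge $\{r, r_C\}$, obtaining a tree $T$ with bag family $\XX$ extending the $\XX_C$ by the single bag $S$ at $r$.

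The step requiring the most care, though I do not expect genuine obstacles, is verifying that this glued structure $(T, \XX)$ is a tree-decomposition of $G$. Vertex coverage is immediate. The only nontrivial coverage check is for edges, and it reduces to the observation that no edge of $G$ can have its two endpoints in two distinct components of $S$ in $G$; hence every edge of $G$ lies either inside $S$ (and is covered by the bag at $r$) or inside $C \cup N_G(C)$ for some $C$ (and is covered by $\TT_C$). For connectivity of the index set $I_v$, vertices $v \in C$ appear only within $T_C$ which is itself connected, while for $v \in S$ every $T_C$ whose bags contain $v$ is attached to $r$ and the bag at $r$ itself contains $v$, so $I_v$ remains connected. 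The width of $(T, \XX)$ is $\max(|S|-1, k) = k$, which matches the right-hand side and completes the proof.
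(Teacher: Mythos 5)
Your proof is correct and matches the argument the paper has in mind: the paper states this proposition without proof, remarking only that it ``follows immediately from Proposition~\ref{prop:clique-bag},'' and your gluing construction uses exactly that proposition to locate the attachment bags containing $N_G(C)$, with the lower bound following from Corollary~\ref{cor:tw-clique} and minor-monotonicity as you say. All the verification steps (edge coverage via the fact that no edge joins distinct components of $S$, and connectivity of $I_v$ via the bag at $r_C$ containing $N_G(C)$) are sound.
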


\begin{corollary}
\label{cor:clique-minsep}
Let $S$ be a cliuqe minimal separator of $G$. Then $\tw(G)$ is
the maximum of $\tw(G[C \cup N_G(C)])$ over all components $C$ of $S$.
\end{corollary}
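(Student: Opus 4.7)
The plan is to derive the corollary from Proposition~\ref{prop:clique-sep} by showing that, when $S$ is a minimal separator, the term $|S|-1$ in the maximum is dominated by the treewidth of one of the parts $G[C \cup N_G(C)]$.

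First, I would invoke Proposition~\ref{prop:clique-sep} directly: since $S$ is in particular a clique separator, $\tw(G)$ equals the larger of $|S|-1$ and $\max_C \tw(G[C \cup N_G(C)])$ over components $C$ of $S$ in $G$. It therefore suffices to exhibit some component $C^*$ with $\tw(G[C^* \cup N_G(C^*)]) \geq |S|-1$.

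Next, I would use the hypothesis that $S$ is minimal. By the characterization recalled in Section~\ref{sec:prelim}, $S$ has at least two full components in $G$; pick any full component $C^*$, so that $N_G(C^*) = S$. Then $C^* \cup N_G(C^*) \supseteq S$, and since $S$ is a clique of $G$ it remains a clique in the induced subgraph $G[C^* \cup N_G(C^*)]$. Applying Corollary~\ref{cor:tw-clique} to this induced subgraph yields $\tw(G[C^* \cup N_G(C^*)]) \geq |S|-1$, so the $|S|-1$ term is absorbed into the maximum, completing the derivation.

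There is no real obstacle here: the only content beyond Proposition~\ref{prop:clique-sep} is that minimality guarantees a full component, which is exactly what promotes $S$ from merely contained in $C^* \cup N_G(C^*)$ to being a clique of the appropriate size inside one of the subproblems. If one wanted to avoid invoking existence of a full component explicitly, one could instead note that every component $C$ of $S$ satisfies $N_G(C) \subseteq S$, and for a minimal separator at least one $C$ achieves equality, which is the same argument phrased slightly differently.
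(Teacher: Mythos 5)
Your derivation is correct and is exactly the intended route: the paper states this as an immediate consequence of Proposition~\ref{prop:clique-sep}, and the only missing step is the one you supply, namely that minimality gives a full component $C^*$ with $N_G(C^*) = S$, so $S$ is a clique inside $G[C^* \cup N_G(C^*)]$ and Corollary~\ref{cor:tw-clique} absorbs the $|S|-1$ term into the maximum. No gaps.
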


A tree-decomposition $\TT$ of $G$ is a \emph{clique-separator decomosition} of
$G$ \cite{tarjan1985decomposition} if it satisfies the following conditions:
\begin{enumerate}
\item the intersection of every pair of adjacent bags in $\TT$ is a
minimal clique separator of $G$, and
\item for every bag $X$ of $\TT$, $G[X]$ does not contain a
clique separator.
\end{enumerate}
Tarjan \cite{tarjan1985decomposition} gives an $O(n m)$ time algorithm
for constructing a clique separator decomposition of a graph with $n$ vertices
and $m$ edges. Following Tarjan, we call the subgraph of $G$ induced by a bag of a clique
separator decomposition an \emph{atom} of the decomposition. 

By repeated applications of Corollary~\ref{cor:clique-minsep}, we have:

\begin{proposition}
\label{prop:clique-dec}
Let $\TT$ be a clique separator decomposition of $G$.
Then $\tw(G)$ is the maximum of $\tw(A)$ where $A$ ranges over all the
atoms of $\TT$.
\end{proposition}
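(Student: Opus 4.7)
The plan is to prove the two inequalities separately, with induction supplying one direction. The direction $\max_t \tw(A_t) \le \tw(G)$ is immediate because each atom $A_t = G[X_t]$ is an induced subgraph of $G$, and restricting the bags of any tree-decomposition of $G$ to $X_t$ yields a tree-decomposition of $A_t$ of no greater width.

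For $\tw(G) \le \max_t \tw(A_t)$, I would induct on the number of bags in $\TT$, using Corollary~\ref{cor:clique-minsep} as the engine. The base case of a single bag is immediate, since then the unique atom equals $G$. For the inductive step, pick a leaf $i$ of $T$ with neighbor $j$, and set $S := X_i \cap X_j$; by Condition~1, $S$ is a clique minimal separator of $G$. I would then argue that $X_i \setminus S$ forms a single component of $G \setminus S$: if it split into two or more components in $G[X_i \setminus S]$, then $S$, being a clique of $G$ contained in $X_i$, would be a clique separator of the atom $A_i = G[X_i]$, violating Condition~2. (We may assume $X_i \setminus S$ is non-empty, as otherwise the leaf bag is redundant and may be discarded without changing the collection of atoms.) Call this component $C^*$ and set $G' := G[V(G) \setminus (X_i \setminus S)]$. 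Applying Corollary~\ref{cor:clique-minsep} to $S$ gives $\tw(G) = \max_C \tw(G[C \cup N_G(C)])$ over components $C$ of $G \setminus S$; the leaf-side contribution is $\tw(G[C^* \cup N_G(C^*)]) \le \tw(A_i)$ because $C^* \cup N_G(C^*) \subseteq X_i$, and every other-side term equals $\tw(G'[C \cup N_{G'}(C)]) \le \tw(G')$ since $N_G(C) \subseteq S \subseteq V(G')$.

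To close the induction I would verify that $\TT' := \TT - i$ is a clique-separator decomposition of $G'$ whose atoms are $\{A_t : t \ne i\}$, so that the inductive hypothesis yields $\tw(G') = \max_{t \ne i} \tw(A_t)$ and therefore $\tw(G) \le \max(\tw(A_i),\tw(G')) = \max_t \tw(A_t)$. Condition~2 transfers trivially, because $G'[X_t] = G[X_t]$ for every $t \ne i$. The main obstacle in the plan is verifying Condition~1 for $\TT'$: each surviving bag-intersection $X_s \cap X_t$ is still a clique, but one must show it remains a minimal separator of $G'$. The key observation is that the two full components of $X_s \cap X_t$ witnessing minimality in $G$ lie on opposite sides of the edge $\{s,t\}$ in $T$, so deleting the vertices $X_i \setminus S$ only trims the side containing $i$; a careful bookkeeping via the sets $V_{\TT}(\cdot,\cdot)$ shows that this trimmed side still contributes a full component in $G'$, at worst after contracting any edges of $T'$ that become redundant (an operation that does not change the collection of atoms).
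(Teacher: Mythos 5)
Your proposal is correct and is exactly what the paper has in mind: the paper gives no details beyond the remark that the result follows ``by repeated applications of Corollary~\ref{cor:clique-minsep},'' and your leaf-peeling induction is precisely that repeated application spelled out. The only loose point is the final parenthetical (an edge of $\TT'$ whose intersection no longer separates $G'$ arises only when all bags on one side are contained in the opposite endpoint's bag, so discarding that side removes atoms that are induced subgraphs of a surviving atom rather than leaving the atom collection unchanged), but this does not affect the maximum and the argument closes as you intend.
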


\subsection{Safe separators}
\label{subsec:safe}

Bodlaender and Koster \cite{bodlaender2006safe} introduded
the notion of safe separators for treewidth.
Let $S$ be a separator of a graph $G$.
We say that $S$ is \emph{safe for treewidth}, or simply safe, 
if $\tw(G) = \tw(G \cup K(S))$.
As every tree-decomposition of $G \cup K(S)$ must have a bag containing $S$,
$\tw(G)$ is the larger of $|S| - 1$ and $\max \{\tw(G[C \cup S] \cup K(S))\}$,
where $C$ ranges over all the components of $S$.

Let us say that a separator $S$ of $G$ is \emph{minor-safe} if
for every component $C$ of $S$ in $G$, there is a clique-minor of $G \setminus
C$ rooted on $N_G(C)$. 

\begin{theorem}
\label{thm:minor-safe}
[Bodlaender and Koster\cite{bodlaender2006safe}]
If $S$ is a minor-safe separator of $G$ then $S$ is safe.
\end{theorem}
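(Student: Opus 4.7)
The direction $\tw(G) \le \tw(G \cup K(S))$ is immediate from $G \subseteq G \cup K(S)$, so the task is to establish the reverse inequality $\tw(G \cup K(S)) \le \tw(G)$. The plan is to take an optimal tree-decomposition $\TT = (T, \{X_t\})$ of $G$ and transform it, without increasing its width, into a tree-decomposition in which a single bag contains the whole of $S$; such a tree-decomposition is automatically a tree-decomposition of $G \cup K(S)$, because the fill-in clique $K(S)$ is then covered by that one bag. For each $v \in V(G)$, write $I_v = \{t \in V(T) : v \in X_t\}$, which by the tree-decomposition axioms is a connected subtree of $T$. By the Helly property for subtrees of a tree, producing a bag containing $S$ is equivalent to arranging the family $\{I_s\}_{s \in S}$ to be pairwise intersecting.

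The minor-safety hypothesis is exactly what delivers these pairwise intersections. Fix a component $C$ of $G \setminus S$ with $N_G(C) = S$; such a full component exists whenever $S$ is a minimal separator, which is the setting of the applications in the paper. Let $\{B_s\}_{s \in S}$ be the branch sets of a clique minor of $G \setminus C$ rooted on $S$: pairwise-disjoint connected subgraphs of $G \setminus C$, each $B_s$ containing $s$, with at least one edge of $G$ between every pair. Define $J_s = \bigcup_{v \in B_s} I_v$. Because $B_s$ is connected in $G$, $J_s$ is a subtree of $T$; because $B_s$ and $B_{s'}$ are joined by an edge of $G$, $J_s$ and $J_{s'}$ share at least one node of $T$; and $I_s \subseteq J_s$ since $s \in B_s$. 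By the Helly property again, all the subtrees $J_s$, for $s \in S$, share a common node $t^\ast$, and the bag $X_{t^\ast}$ contains a distinct representative $v_s \in B_s$ for every $s$.

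The final step is to promote each representative $v_s$ to $s$: along the unique path in $T$ from the nearest node of $I_s$ to $t^\ast$, which stays inside $J_s$, substitute in each bag on the path some vertex of $B_s$ contained in that bag by $s$ itself. The substitutions for different $s \in S$ do not interfere because the $B_s$ are pairwise disjoint, and each individual substitution swaps one vertex for one vertex, so no bag grows. After the modification every $I_s$ has been enlarged to contain $t^\ast$, hence the bag at $t^\ast$ contains $S$, which is what is needed.

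The main technical obstacle is verifying that these substitutions preserve the tree-decomposition axioms for the displaced vertices in $\bigcup_s B_s \setminus S$: for each such vertex $w$, its remaining appearance set must still be a subtree of $T$, and every edge of $G$ incident to $w$ must still be covered by some bag. This relies on the connectedness of each $B_s$ in $G$, which allows the substitutions to be carried out incrementally along the path: at each step one can choose the specific representative to be displaced so that its original appearance subtree is trimmed only at a leaf, preserving the subtree property, and so that an adjacent vertex inside $B_s$ remains present in a neighboring bag to cover any internal edge of $B_s$. Once this bookkeeping is in place, the construction yields a tree-decomposition of $G \cup K(S)$ of width $\tw(G)$, completing the proof.
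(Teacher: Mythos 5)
You are proving a statement the paper itself only cites (to Bodlaender and Koster) without proof, so the argument must stand on its own, and it does not. The Helly part is correct: for branch sets $\{B_s\}_{s\in S}$ of a clique minor of $G\setminus C$ rooted on $S$, the subtrees $J_s=\bigcup_{v\in B_s}I_v$ pairwise intersect, so some bag $X_{t^*}$ of the optimal tree-decomposition meets every $B_s$; this correctly gives $|S|-1\le\tw(G)$. The gap is the ``promotion'' step, and it is not deferrable bookkeeping: replacing a vertex $w\in B_s$ by $s$ in a bag can leave an edge of $G$ incident to $w$ covered by no bag at all, and your remedies (trim $I_w$ only at a leaf; keep a neighbour of $w$ inside $B_s$ in an adjacent bag) do not address edges from $w$ to vertices outside $B_s$, nor the case where the modified bag is the \emph{unique} bag containing both endpoints of an edge. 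Concretely, let $G=C_8$ on $v_1,\dots,v_8$, $S=\{v_1,v_5\}$ (a minor-safe minimal separator with two full components), and take the optimal width-$2$ tree-decomposition whose bags are $X_i=\{v_2,v_i,v_{i+1}\}$, $i=3,\dots,8$ (with $v_9=v_1$), arranged in a path. Here each cycle edge $v_iv_{i+1}$, $3\le i\le 8$, lies in exactly one bag, namely $X_i$. Enumerating the possible branch-set pairs $B_{v_1},B_{v_5}$ inside the path $v_5v_6v_7v_8v_1$ and the possible representatives, every admissible substitution deletes, from its unique covering bag, an endpoint of some cycle edge or of the edge joining $B_{v_1}$ to $B_{v_5}$; that edge then lies in no bag. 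So for this tree-decomposition no width-preserving substitution of the kind you describe exists: the in-place approach has to be replaced, not patched.

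The proof that works abandons the single decomposition and argues component by component --- which is also why the hypothesis supplies a clique minor for \emph{every} component $C$, whereas you use only one. For each component $C$, contracting each branch set of the clique minor of $G\setminus C$ onto its root and deleting the remaining vertices of $G\setminus C$ exhibits $G[C\cup N_G(C)]\cup K(N_G(C))$ as a minor of $G$, hence of treewidth at most $\tw(G)$; take an optimal tree-decomposition of it, which by Proposition~\ref{prop:clique-bag} has a bag containing the clique $N_G(C)$. Attach each such decomposition, at such a bag, to one new central bag equal to $S$. Every edge of $G\cup K(S)$ lies in $K(S)$ or in some $G[C\cup N_G(C)]$, the connectivity condition is routine, and the width is $\max\bigl(|S|-1,\max_C\tw(G[C\cup N_G(C)]\cup K(N_G(C)))\bigr)\le\tw(G)$, the bound $|S|-1\le\tw(G)$ coming from your Helly argument applied to a full component. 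The displaced branch-set vertices cause no trouble because each vertex of $G\setminus C$ lies in $S$ or in another component $C'$ and is handled in the piece for $C'$. (Your tacit assumption that $S$ has a full component is genuinely needed for the bound $|S|-1\le\tw(G)$ --- the paper's literal definition of minor-safe, phrased with $N_G(C)$ rather than $S$, does not supply it otherwise --- and it is harmless in the paper's application to almost-clique \emph{minimal} separators.)
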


A vertex set $S \subseteq V(G)$ is an \emph{almost-clique} of $G$
if $S \setminus q$ is a clique of $G$ for some $q \in S$.
We call a separator $S$ of $G$ an \emph{almost-clique minimal separator}
if it is an almost-clique and a minimal separator at the same time.
The following observation is also due to Bodlaender and Koster
\cite{bodlaender2006safe}. 
It is originally stated for
inclusion-minimal almost-clique separators, but it is clear that
it holds more generally with almost-clique minimal separators.

\begin{proposition}
\label{prop:almost-clique-sep}
If $S$ is an almost-clique minimal separator of a graph $G$, then $S$ is
minor-safe.
\end{proposition}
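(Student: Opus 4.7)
The plan is to leverage the two structural ingredients: $S$ is almost a clique, missing only one edge per non-clique pair through a single vertex $q$; and $S$ has at least two full components since it is a minimal separator. Fix $q \in S$ so that $S \setminus \{q\}$ is a clique of $G$, and let $C_1, C_2$ be two distinct full components of $S$ in $G$, so $N_G(C_1) = N_G(C_2) = S$. Let $C$ be an arbitrary component of $S$; we must exhibit a clique-minor of $G \setminus C$ rooted on $N_G(C)$.

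First handle the easy sub-case: if $q \notin N_G(C)$, then $N_G(C) \subseteq S \setminus \{q\}$, which is already a clique of $G$. Deleting from $G \setminus C$ all vertices outside $N_G(C)$ (no contractions needed) yields the required clique-minor. So assume $q \in N_G(C)$. Since at least one of $C_1, C_2$ is different from $C$, pick a full component $C^* \in \{C_1, C_2\} \setminus \{C\}$; note $C^* \subseteq V(G \setminus C)$ and, as $C^*$ is full, $q$ has at least one neighbor in $C^*$.

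Now contract $C^*$ into $q$. Because $C^*$ is connected and $q$ has a neighbor in $C^*$, we can build the contraction sequence greedily: at each step the current ``$q$-blob'' consists of $q$ together with some connected subset $W \subsetneq C^*$; pick any edge of $G$ between $W \cup \{q\}$ and $C^* \setminus W$ (such an edge exists by connectedness of $C^*$ and the fact that $W \cup \{q\}$ already contains a vertex of $C^*$ or a neighbor thereof) and contract it into $q$. Every contraction is on an edge between a vertex identified with $q \in N_G(C)$ and a vertex of $C^* \subseteq V(G) \setminus N_G(C)$, as required by the definition of a rooted minor. After all of $C^*$ is absorbed into $q$, the vertex $q$ in the minor is adjacent to every vertex that had a neighbor in $C^* \cup \{q\}$ in $G$; in particular, since $N_G(C^*) = S$, $q$ is adjacent to every vertex of $S \setminus \{q\}$, hence to every vertex of $N_G(C) \setminus \{q\}$. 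Combined with the pre-existing clique on $N_G(C) \setminus \{q\}$ (a sub-clique of $S \setminus \{q\}$), the vertices of $N_G(C)$ now form a clique in the contracted graph. Deleting all remaining vertices outside $N_G(C)$ produces the desired clique-minor of $G \setminus C$ rooted on $N_G(C)$.

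The only step that requires any care is ensuring that the contraction sequence is valid in the strict sense of the ``rooted on $R$'' definition, namely that each contracted edge joins a vertex in (the current copy of) $R = N_G(C)$ to one outside $R$. This is immediate in our construction since the non-$R$ side of every contracted edge always lies in $C^* \setminus W$, and the $R$-side is always (the vertex identified with) $q$. With that checked, Proposition~\ref{prop:almost-clique-sep} follows.
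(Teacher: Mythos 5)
Your proof is correct and follows essentially the same route as the paper: contract a full component of $S$ other than $C$ into the exceptional vertex $q$ to complete $S$ (hence $N_G(C)$) into a clique, then delete the remaining vertices. The extra case split for $q \notin N_G(C)$ and the explicit greedy contraction sequence are just more detailed versions of the paper's one-line argument.
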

\begin{proof}
Let $v \in S$ such that $S \setminus \{v\}$ is a clique.
Let $C$ be an arbitray component of $S$ in $G$. Since $S$ is a minimal
separator, $S$ has a full component $C'$ distinct from $C$.
Contracting $C'$ into $v$, we have a clique-minor of $G \setminus C$ 
rooted on $S$. Deleting vertices in $S \setminus N_G(C)$,
we have a clique-minor of $G \setminus C$ rooted on $N_G(C)$.
\end{proof}

Combining Propositions~\ref{prop:clique-dec}, \ref{prop:almost-clique-sep}
and Theorem~\ref{thm:minor-safe}, we see that
Algorithm~\ref{alg:tw-acs} for treewidth computation given 
in the introduction is correct.

\subsection{Computing minimal triangulations}
Several algorithms are known for computing a minimal triangulation of
a given graph
\cite{ohtsuki1976minimal,rose1976algorithmic,berry2004maximum,berry2003minimum},
see a survey \cite{heggernes2006minimal} for more.
In our experiments, we use MCS-M (Maximum Cardinality Search for
Minimal triangulation) 
\cite{berry2004maximum} and MMD (Minimal Minimum Degree)
\cite{berry2003minimum}.
The principal difference of these algorithms, from our perspective,
is that MMD is a good heuristic for upperbounding treewidth while MCS-M
is not intended for treewidth computation at all.

We also use a variant, we call MMAF (Minimal Minimum Average Fill), 
of MMD. To describe this variant, we need to review MMD.
MMD is based on MD 
\cite{markowitz1957elimination} (see also
\cite{bodlaender2010treewidth}), which is one of the
several heuristics for triangulation based on elimination orders.
In these heuristics, given graph $G$, a total ordering
$v_1$, $v_1$, \ldots, $v_n$ of $V(G)$ is constructed 
together with a triangulation $H$ of $G$ such that
this ordering is a perfect elimination order of $H$.
This is done as follows.
Let $H_0 = G$. At step $i$, $1 \leq i \leq n$, 
we choose the next vertex $v_i$ in the ordering and
let $H_i = (H_{i - 1} \setminus \{v_i\}) \cup K(N_{H_{i - 1}}(v_i))$.
In words, we fill the neighborhood of the chosen vertex $v_i$ in
$H_{i - 1}$, remove $v_i$, and let the resulting graph be $H_i$.
Letting $H$ be the union of $H_i$, $1 \leq i \leq n$, 
we see that the vertex ordering constructed is a perfect elimination order
of $H$. Note also that all the maximal cliques of $H$
can be found among $N_{H_{i - 1}}(v_i)$, $1 \leq i \leq n$.

In MD, $v_i$ is chosen from the vertices of the minimum degree in 
$H_{i - 1}$. In another heuristic MF (minimum fill), $v_i$ is
chosen from the vertices of minimum fill in $H_{i - 1}$, 
where the fill of $v$ in $H_{i - 1}$ is the number of
missing edges of $H_{i - 1}$ in the neighborhood of $v$.
It is observed \cite{bodlaender2010treewidth} that MF
often outperforms MD as a treewidth heuristic. In \cite{ohtsuka2018experimental},
it is observed that MAF (Minimum Average Fill) heuristic often performs 
even better, where $v_i$ is chosen from vertices $v$ such that
the fill of $v$ divided by the degree of $v$ is the smallest.

These methods based on elimination orders do not produce a minimal triangulation
in general. Berry {\it et al.} \cite{berry2003minimum} gives a
scheme of turning those methods into a minimal triangulation algorithm, 
which we sketch as follows.
Let $H$ be the triangulation of $G$ computed, say, by MD.
For each separator $S$ of $G$ that is filled into a clique in $H$,
we compute minimal separators of $G$ contained in $S$.
Rather than filling $S$ into a clique, we fill those minimal separators.
The resulting graph $G'$ is a subgraph of $H$ and, in general, is not
a triangulation of $G$. Since we have filled only minimal separators
of $G$ in $G'$, a minimal triangulation of $G'$ is a minimal triangulation
of $G$. So we apply MD to $G'$ and repeat until we get a triangulation
of $G$, which is necessarily minimal.

MMD is the result of applying the above scheme to MD. Our variant
MMAF is the result of applying the scheme to MAF. 
In Section~\ref{sec:experiment}, we will see how these two 
minimal triangulation methods together with MCS-M 
perform in our context of generating almost-clique minimal separators.

\section{Experiments}
\label{sec:experiment}
\subsection{Computational environments}
The computing environment for our experiments is as follows.
CPU: Intel Core i7-8700, 3.20GHz; RAM: 32GB; 
Operating system: Windows 10, 64bit; 
Programming language: Java 1.8; JVM: jre1.8.0\_201.
The maximum heap size is set to 28GB. The implementation is single-threaded, 
except that multiple threads may be invoked for garbage collection by JVM.
The time is measured by System.nanoTime() method and is rounded up to
the nearest millisecond.

\subsection{Graph instances}
We use two sets of instances. One is from PACE 2017 exact treewidth track 
\cite{dell2018pace} and consists of 200 instances. We call them PACE2017 instances.
The other is from the DIMACS challenge on graph-coloring \cite{johnson1996cliques}
and consists of 73 instances. We call them DIMACS instances.
Figure~\ref{fig:PACE2017} shows the PACE2017 instances: 
for each instance, a blue circle and a red point are plotted where the 
the $x$-coordinate is the number of vertices; the $y$-coordinate
is the number of edges for the blue circle and is the treewidth for
the red point. Figure~\ref{fig:DIMACS} similarly shows the
DIMACS instances. Since the exact treewidth is not known for
many of the instances in this set, the best-known upperbound on treewidth 
is used instead.

\begin{figure}[htbp]
\begin{center}
\includegraphics[width=5in, bb = 0 0 1492 847]{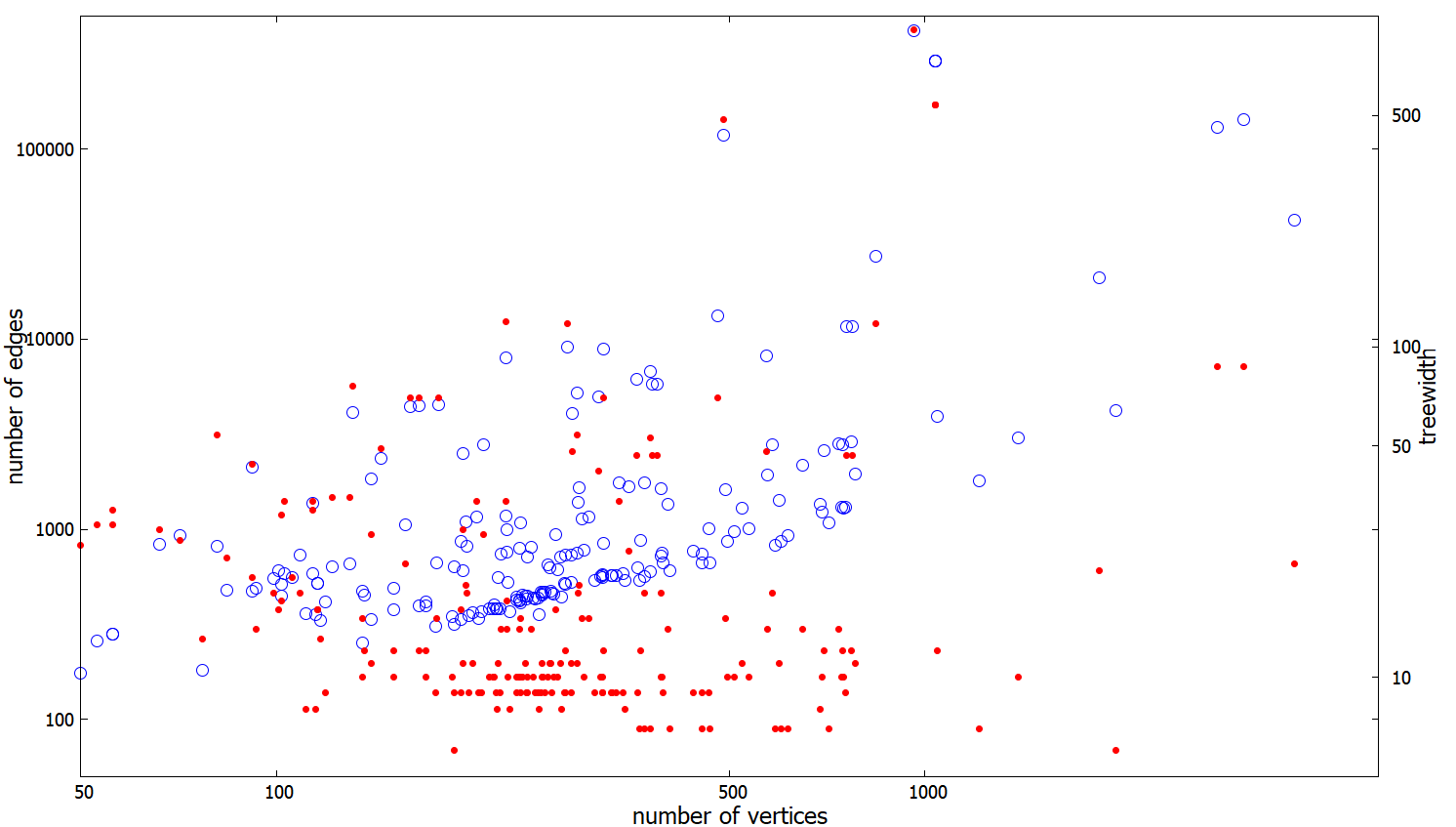}\\
\end{center}
\caption{PACE2017 graph instances}
\label{fig:PACE2017}
\end{figure} 

\begin{figure}[htbp]
\begin{center}
\includegraphics[width=5in, bb = 0 0 1492 847]{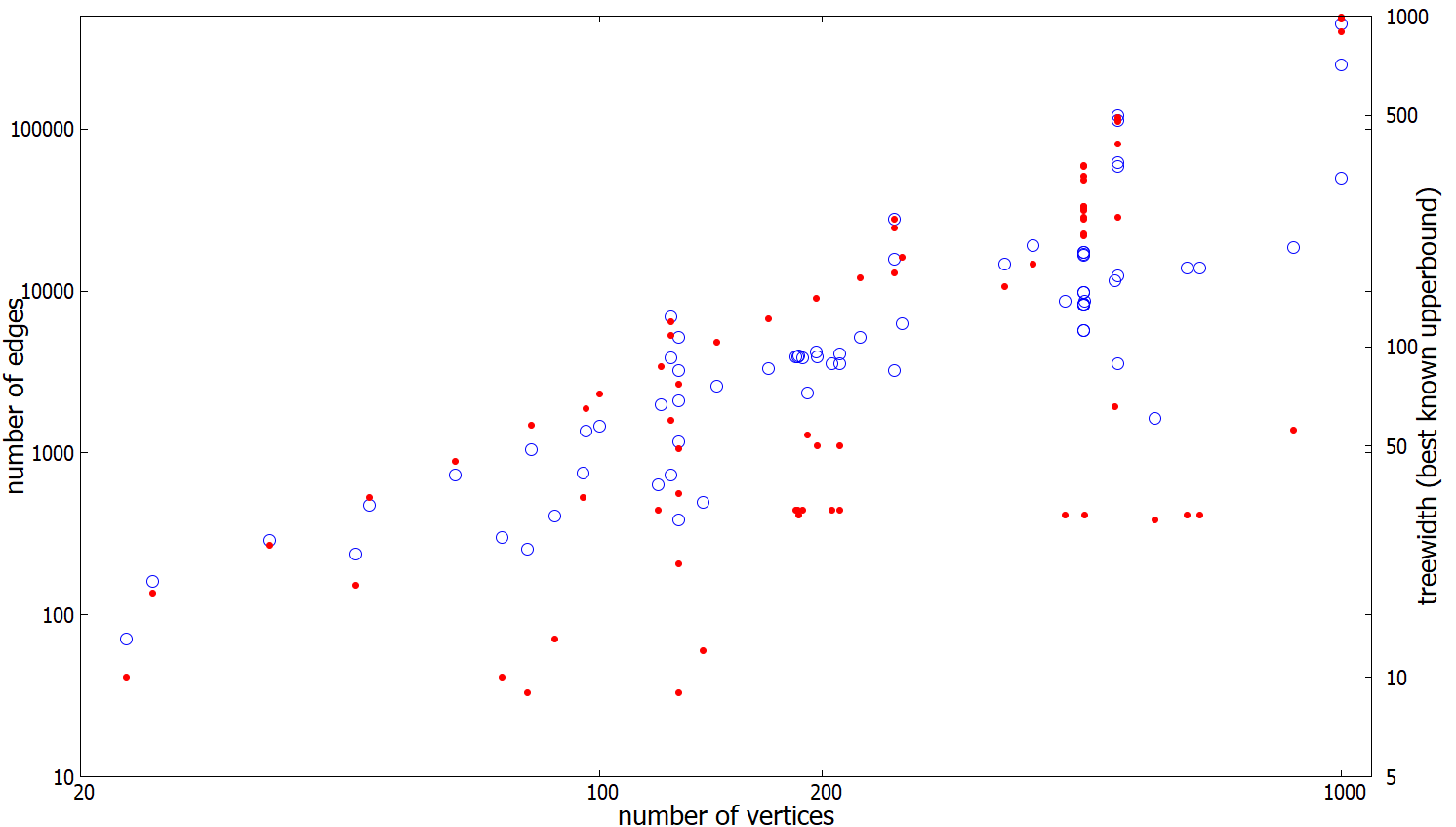}\\
\end{center}
\caption{DIMACS graph instances}
\label{fig:DIMACS}
\end{figure}

\subsection{Almost-clique separators from minimal triangultions}
\label{subsec:mininmal-triang}
Recall the notation defined in the introduction:
for graph $G$ and a minimal triangulation $H$ of $G$,
$\QQ(G, H)$ is the set of all almost-clique minimal separators
of $G$ that are minimal separators of $H$. 
In addition, we denote by $\QQ_{\rm all}(G)$ the set of all 
almost-clique minimal separators of $G$ 
and by $\QQ_{\rm max}(G)$ the maximal
subset of $\QQ_{\rm all}(G)$ consisting of pairwise non-crossing separators, 
computed by a greedy algorithm that scans the members of
$\QQ_{\rm all}(G)$ in a fixed ordering and adopts a member if it
does not cross any member previously adopted.
Similarly, we denote by $\QQ_{\rm max}(G, H)$ the maximal set
of pairwise non-crossing almost-clique minimal separators 
containing $\QQ(G, H)$ obtained by the same greedy procedure 
with the initial set $\QQ(G, H)$.

In this subsection, we experiment on the closeness
of $\QQ(G, H)$ to $\QQ_{\rm max}(G)$ and to $\QQ_{\rm max}(G, H)$ 
for each $G$ from PACE2017 instances, comparing the 
methods MMD, MMAF, MCS-M for computing the minimal triangulation $H$.

Figure~\ref{fig:acms_all}
show $|\QQ_{\rm all}(G)|$ and $|\QQ_{\rm max}(G)|$ for each $G$
in the increasing order of $|\QQ_{\rm max}(G)|$. 
There are 9 out of 200 instances for which $|\QQ_{\rm all}(G)|$ is empty:
these instances are omitted from these and subsequent figures.
For each instance, $|\QQ_{\rm max}(G)|$ is represented by the black bar and 
the difference $|\QQ_{\rm all}(G)|- |\QQ_{\rm max}(G)|$ is 
represented by the gray bar. From these figures, we see that
PACE2017 instances are abundant in almost-clique minimal separators and
the number of pairwise non-crossing ones is also large.
The median of $|\QQ_{\rm max}(G)|$ among the 200 instances is 101, achieved
by instances ex069 and ex150: for ex069, $|V(G)| = 235$, $|\QQ_{\rm max}(G)| =
100$, and $|\QQ_{\rm all}(G)|$ is 148; 
for ex150, $|V(G)| = 839$, $|\QQ_{\rm max}(G)| = 102$, 
and $|\QQ_{\rm all}(G)|$ is 161.
The maximum of $|\QQ_{\rm max}(G)|$ is 716, achieved by instance ex109
with 1212 vertices, for which $|\QQ_{\rm all}(G)|$ is 1588.

\begin{figure}[htbp]
\begin{center}
\includegraphics[width=5in, bb = 0 0 926 435]{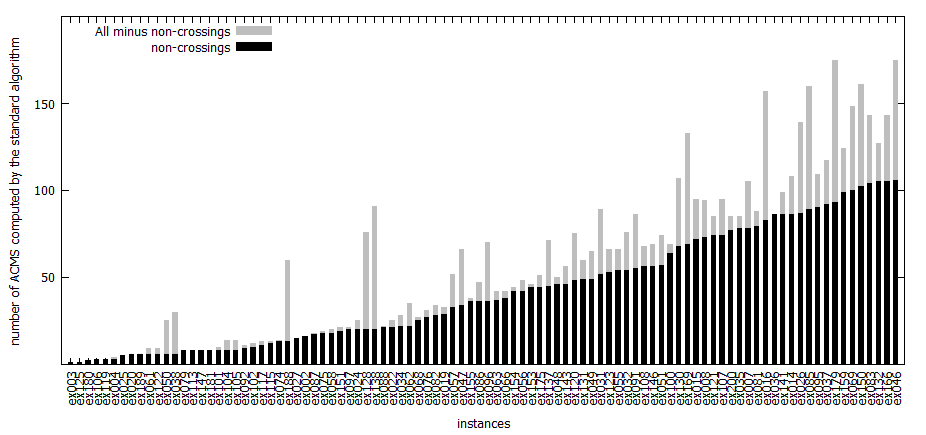}\\
\end{center}
\begin{center}
\includegraphics[width=5in, bb = 0 0 926 435]{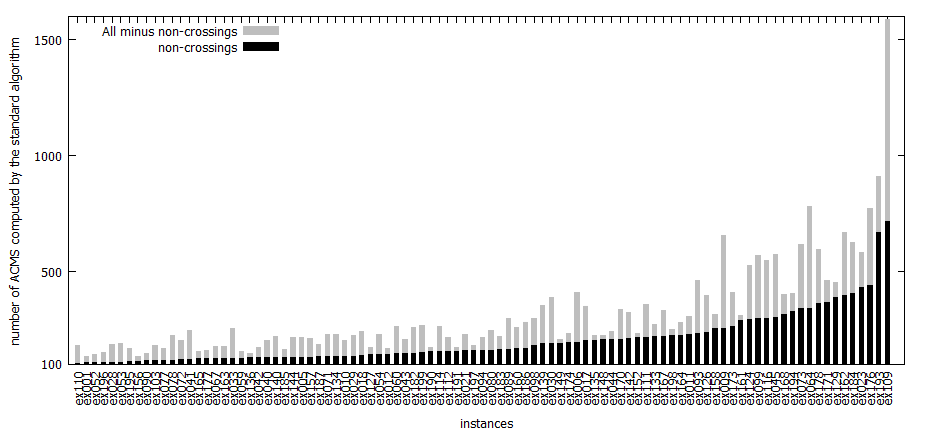}\\
\end{center}
\caption{The number of almost-clique minimal separators for each PACE2017
instance}
\label{fig:acms_all}
\end{figure} 

\begin{figure}[htbp]
\begin{center}
\includegraphics[width=5in, bb = 0 0 926 435]{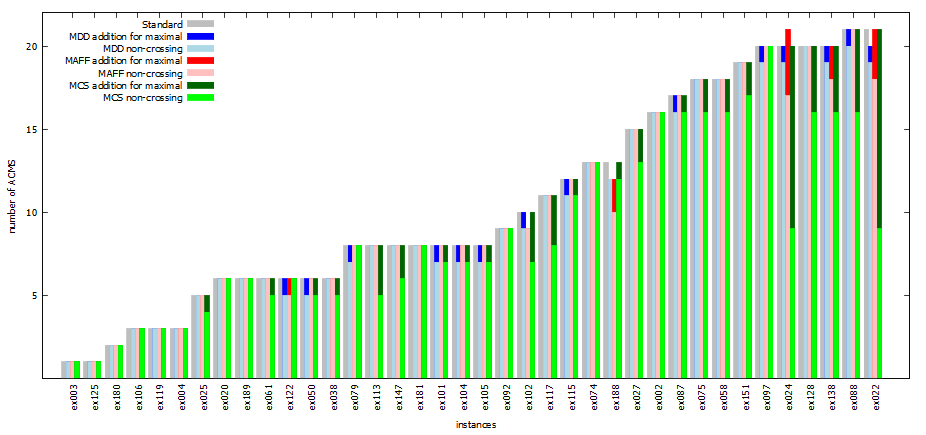}\\
\end{center}
\begin{center}
\includegraphics[width=5in, bb = 0 0 926 435]{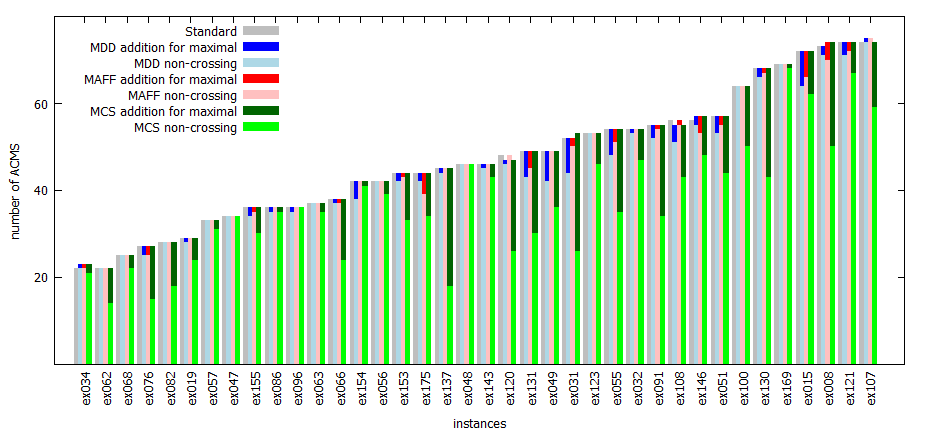}\\
\end{center}
\begin{center}
\includegraphics[width=5in, bb = 0 0 926 435]{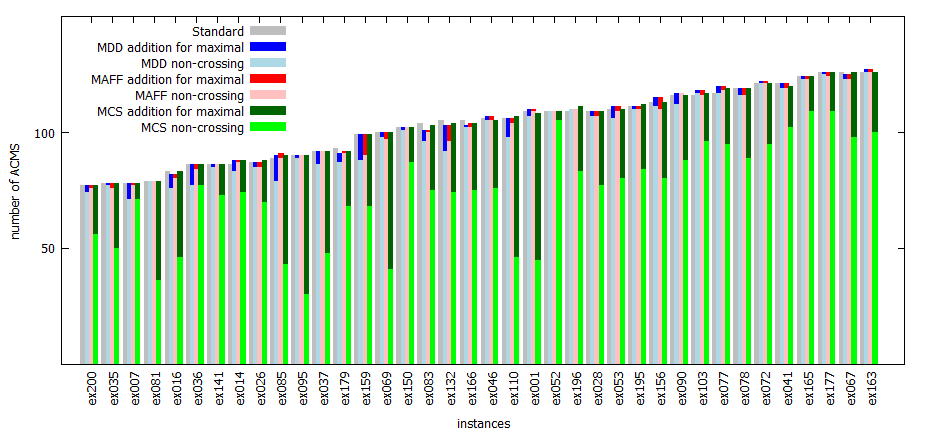}\\
\end{center}
\caption{The number of almost-clique minimal separators from
minimal triangulations for each PACE2017 instance: Groups 1, 2, 3}
\label{fig:acms_compare1}
\end{figure} 

\begin{figure}[htbp]
\begin{center}
\includegraphics[width=5in, bb = 0 0 926 435]{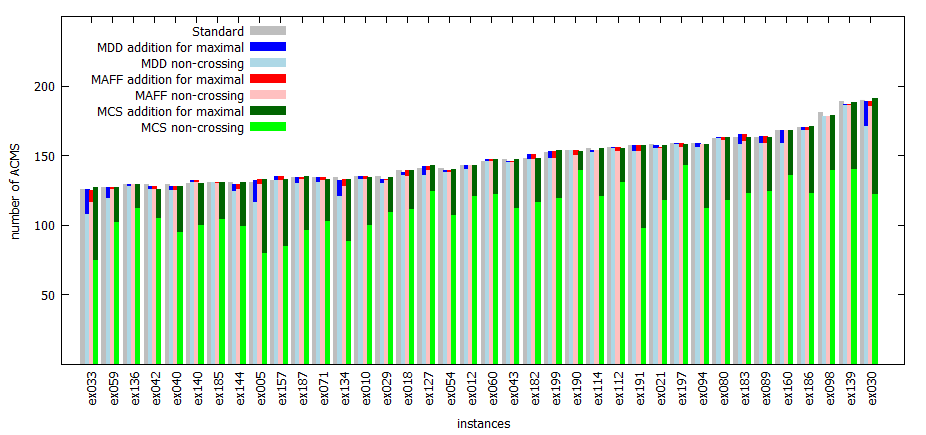}\\
\end{center}
\begin{center}
\includegraphics[width=5in, bb = 0 0 926 435]{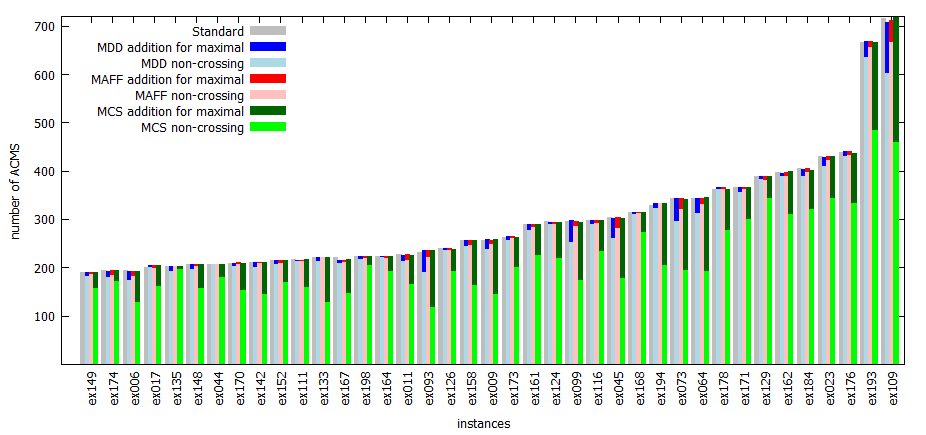}\\
\end{center}
\caption{The number of almost-clique minimal separators from
minimal triangulations for each PACE2017 instance: Groups 4, 5}
\label{fig:acms_compare2}
\end{figure}

Figures~\ref{fig:acms_compare1} and \ref{fig:acms_compare1} compare $|\QQ(G,
H)|$, for each PACE2017 instance $G$, where the minimal triangulation $H$
of $G$ is computed by three methods MMD, MMAF, and MCS-M.
The instances are divided into five subfigures, grouped in the
increasing order of $|\QQ_{\rm max(G)}|$: the first three groups are 
in Figure~\ref{fig:acms_compare1} and the remaining two are in
Figure~\ref{fig:acms_compare2}. For each instance, the gray bar represents
$|\QQ_{\rm max(G)}|$, the light-colored bars represent $|\QQ(G, H)|$, and dark-colored bars represent $|\QQ_{\rm max}(G, H) \setminus \QQ(G, H)|$, where blue is for MDD, 
red is for MMAF, and green is for MCS-M. We observe the following:
\begin{enumerate}
\item The three maximal sets $\QQ_{\rm max}(G, H)$, where
$H$ is computed by the three methods, have cardinalities
similar to each other and to the cardinality of $\QQ_{\rm max}(G)$
computed by the standard method.
\item MMD performs best, in terms of the
gap $|\QQ_{\rm max}(G, H) \setminus \QQ(G, H)|$ between the computed set 
of almost-clique minimal separators and the
maximally expanded set; MDD is slightly inferior
and MCS-M is by far inferior to the other two methods.
\end{enumerate}
The second point can be confirmed by
Table~\ref{tab:acms_compare}, where, for each method and each bound 
$\rho$ on the ratio, the number of instances satisfying
$|\QQ_{\rm max}(G, H)| / |\QQ(G, H)| \leq \rho$ is listed.
\begin{table}[htbp]
\begin{center} 
\begin{tabular}{|r|r|r|r|r|r|r|r|r|r|r|} 
\hline
&1.0&1.1&1.2&1.3&1.4&1.5&1.7&3.0\\ \hline
{\tiny MMD}&{\small 50}&{\small 173}&{\small 199}&{\small 200}&& & &
\\ \hline

{\tiny MMAF}&{\small 79}&{\small 194}&{\small 199}&{\small 200}& & & & \\
\hline 
{\tiny MCS-M}&{\small 27}&{\small 42}&{\small 78}&{\small 115}&{\small
150}&{\small 160}&{\small 180}&{\small 200}\\ \hline
\end{tabular} \end{center}
\caption{The numbers of instances with the rario $|\QQ_{\rm max}(G, H)| / 
|\QQ(G, H)|$ within the given bound}
\label{tab:acms_compare}
\end{table}

Table~\ref{tab:width_compare} compares the treewidth of the minimal
triangulation computed by these three methods. For each method and 
each bound $\rho$ on the ratio, the number of instances with
$\tw(H) / \tw(G) \leq \rho$, where $H$ is the minimal triangulation computed
by the method, is listed. MMAF performs the best and MCS-M is considerably
inferior, which is not surprising since MCS-M is not intended for small
treewidth.
The correlations of the performances shown by these two tables are not
surprising either. For each almost-clique minimal separator $S$ of $G$, 
there exists a tree-decomposition of $tw(G)$ that induces $S$ as a separator.
It is natural to expect that the chances of a tree-decomposition $\TT$ of $G$
inducing $S$ are larger when the width of $\TT$ is closer to $\tw(G)$.

\begin{table}[htbp]
\begin{center} 
\begin{tabular}{|r|r|r|r|r|r|r|r|r|r|r|} 
\hline
&1.0&1.1&1.2&1.3&1.4&2.6&3.9&8.8\\ \hline
{\tiny MMD} &{\small 22}&{\small 40 }&{\small 76 }&{\small 125}&{\small 
146}&{\small 200}& & \\ \hline 
{\tiny MMAF}&{\small 48}&{\small 94 }&{\small 160}&{\small 181}&{\small
194}&{\small 200}& & \\ \hline 
{\tiny MCS-M} &{\small 14}&{\small 18 }&{\small 29
}&{\small 42 }&{\small 53 }&{\small 164}&{\small 190}&{\small 200}\\ \hline
\end{tabular} \end{center}
\caption{The numbers of instances with the rario $tw(H) / tw(G)$ 
within the given bound}
\label{tab:width_compare}
\end{table}

\subsection{Almost-clique separator decomposition}
In this subsection, we compare two implementations of
Algorithm~\ref{alg:tw-acs} given in the introduction for
computing treewidth using an almost-clique separator decomposition: one uses
the standard algorithm for listing almost-clique minimal separators
and the other uses our heuristic based on minimal triangulations.
For the latter implementation, we use MMAF for computing the minimal
triangulation, as it is the best among the three methods compared
by the experiments in the previous subsection.
In the figures in this subsection, instances are plotted with circles
whose areas are proportional to the number of vertices of the instances.
The constant of proportionality, however, may not be consistent across figures.

\subsubsection{PACE2017 instances}
We start with comparisons on PACE2017 instances.

We first compare the time for computing the almost-clique separator
decomposition by the two implementations.
Recall the notation in the introduction: $t_{\rm ours}(G)$ and 
$t_{\rm standard}(G)$ denote time spent on $G$ by our 
heuristic implementation and by the standard
implementation, respectively.
Figure~\ref{fig:ppTime} plots
PACE2017 instances, where the $x$-coordinate is $t_{\rm ours}(G)$ 
and the $y$-coordinate is $t_{\rm standard}(G)$.
We see that our heuristic implementation is by orders of magnitudes faster
than the standard implementation.

\begin{figure}[htbp]
\begin{center}
\includegraphics[width=5in, bb = 0 0 589 552]{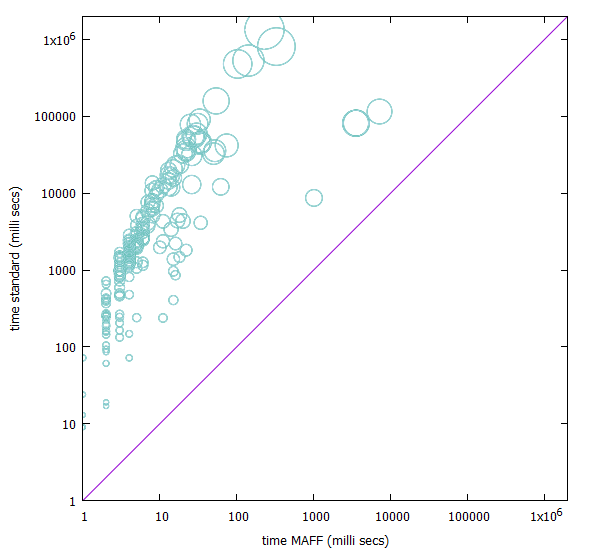}\\
\end{center}
\caption{Time for computing the almost-clique separator
decomposition of PACE2017 instances}
\label{fig:ppTime}
\end{figure} 

To compare the quality of the almost-clique separator decomposition
produced by the two implementations, we first consider the
the number of vertices in the largest atom of the decomposition.
Recall the notation in the introduction: $\maxatom_{\rm ours}(G)$ and 
$\maxatom_{\rm standard}(G)$ denote 
the maximum number of vertices in an atom of the decomposition 
of $G$ produced by our heuristic implementation and by the standard
implementation, respectively.
Figure~\ref{fig:maxAtom} plots
PACE2017 instances, where the $x$-coordinate is $|\maxatom_{\rm ours}(G)| /
|V(G)|$ and the $y$-coordinate is $|\maxatom_{\rm standard}(G)|/ |V(G)|$.
We see that the performances of the two implementations in this measure are
almost equal for most of the instances. Note also that, for a majority of
the instances, the number of vertices in the largest atom is smaller than half of
the total number of vertices: the almost-clique separator decomposition approach
itself is quite effective.
\begin{figure}[htbp]
\begin{center}
\includegraphics[width=5in, bb = 0 0 624 616]{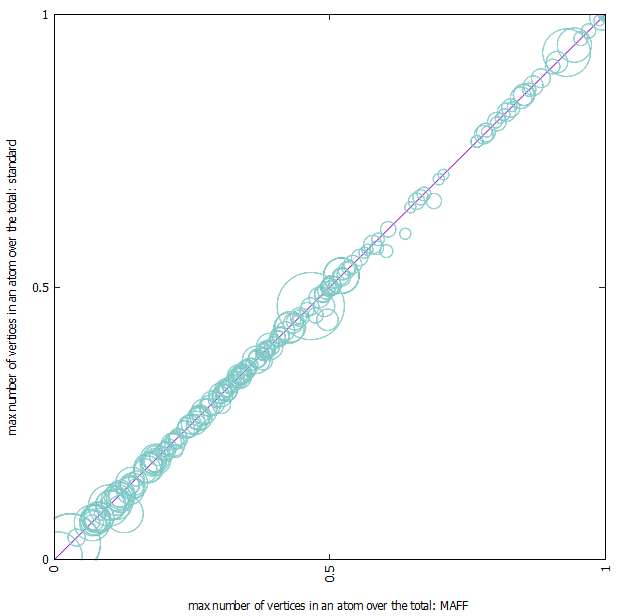}\\
\end{center}
\caption{Relative size of the largest atom in 
the almost-clique separator decomposition of PACE2017 instances}
\label{fig:maxAtom}
\end{figure} 

We next consider the time for computing $\tw(G)$ given 
the almost-clique separator decomposition.
Recall the notation in the introduction: $t'_{\rm ours}(G)$ and 
$t'_{\rm standard}(G)$ denote the time for
computing $\tw(G)$ given 
the almost-clique separator decompositionn 
of $G$ produced by our heuristic implementation and by the standard
implementation, respectively. 
For this treewidth computation, we use our implementation of the treewidth
algorithm due to Tamaki \cite{tamaki2019computing}.
Figure~\ref{fig:solveTime} plots
PACE2017 instances, where the $x$-coordinate is $t'_{\rm ours}(G)$ 
and the $y$-coordinate is $t'_{\rm standard}(G)$.
We see that the performances of the two implementations in this measure are
also almost equal for most of the instances.

\begin{figure}[htbp]
\begin{center}
\includegraphics[width=5in, bb = 0 0 589 552]{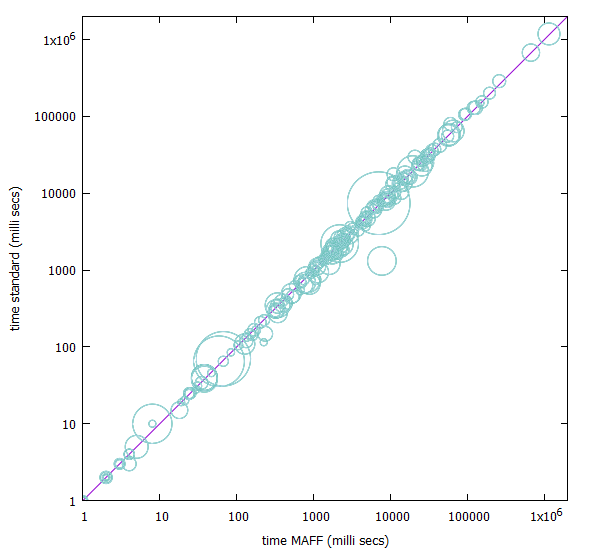}\\
\end{center}
\caption{Time for computing the treewidth of PACE2017 instances given the
almost-clique separator decomposition}
\label{fig:solveTime}
\end{figure} 

We finally compare the total time for computing the treewidth by
the two implementations.
Figure~\ref{fig:totalTime} plots
PACE2017 instances, where the $x$-coordinate is $t_{\rm ours}(G) + t'_{\rm
ours}(G)$ and the $y$-coordinate is $t_{\rm standard}(G) + t'_{\rm
standard}(G)$. We see that our heuristic implementation
consistently ourperforms the standard implementation in this measure.

\begin{figure}[htbp]
\begin{center}
\includegraphics[width=5in, bb = 0 0 589 552]{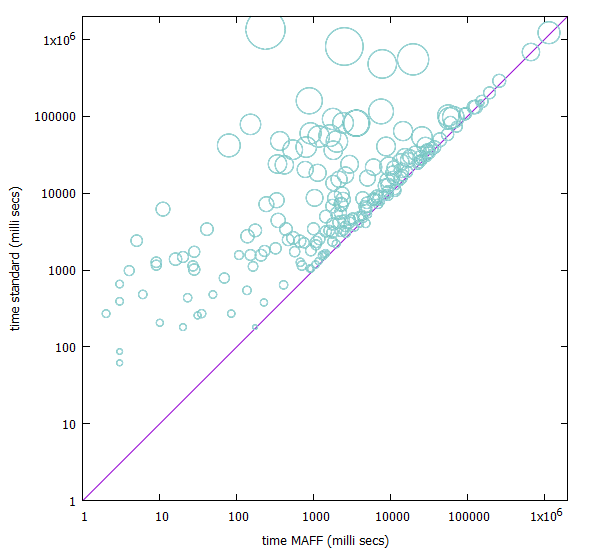}\\
\end{center}
\caption{Total time for computing the treewidth of PACE2017 instances}
\label{fig:totalTime}
\end{figure} 

\subsubsection{DIMACS graph coloring instances}
We have also compared the two implementations on DIMACS instances.
Of the total of 73 instances in this set, 41 instances have no almost-clique separators. We call them \emph{sterile} instances and
treat them separately in the following figures.

We first compare the time for computing the almost-clique separator
by the two implementations. We continue to use the
same notation: $t_{\rm ours}(G)$ and 
$t_{\rm standard}(G)$ denote time spent on $G$ by our 
heuristic implementation and by the standard
implementation, respectively.
We have two figures: Figure~\ref{fig:ppTimeColoring0} for 
sterile instances and Figure~\ref{fig:ppTimeColoring1} for non-sterile
instances.
In both figures, each instance is plotted where the $x$-coordinate 
is $t_{\rm ours}(G)$ and $y$-coordinate is$t_{\rm standard}(G)$. From these figures, we see that
the advantage of our heuristic implementation over the standard one is larger
on non-sterile instances than on sterile instances. Our implementation, 
however, is at least ten times faster than the standard one even on
sterile instances.

\begin{figure}[htbp]
\begin{center}
\includegraphics[width=5in, bb = 0 0 624
616]{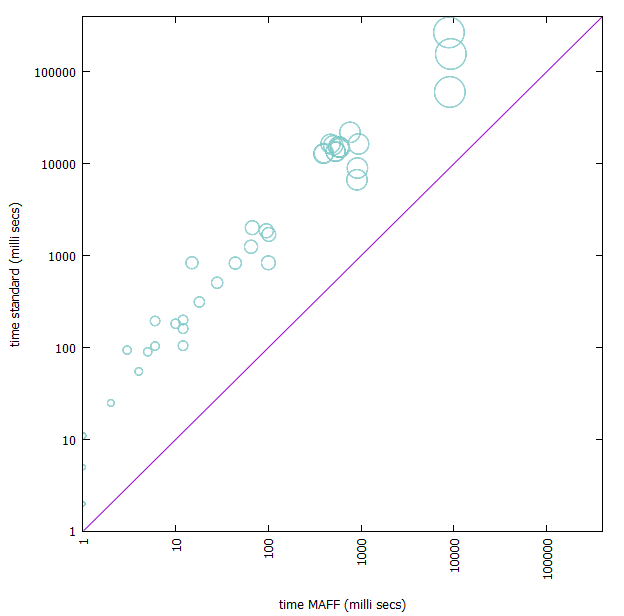}\\
\end{center}
\caption{Time for computing the almost-clique separator
decomposition of for sterile DIMACS instance}
\label{fig:ppTimeColoring0}
\end{figure} 

\begin{figure}[htbp]
\begin{center}
\includegraphics[width=5in, bb = 0 0 589
552]{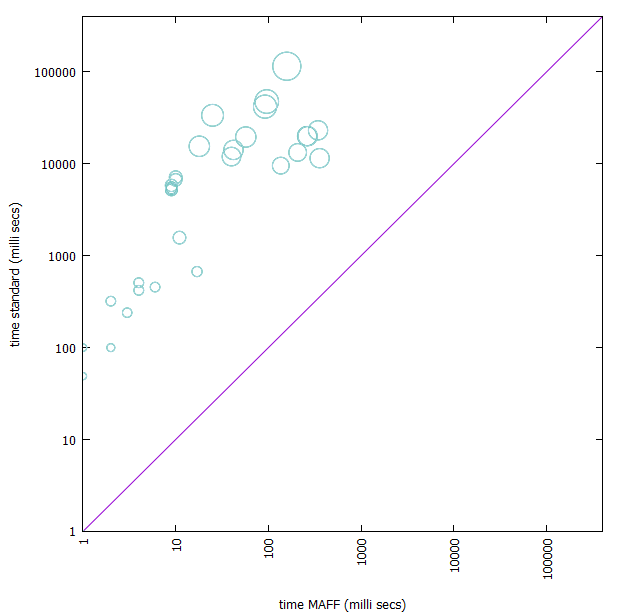}\\
\end{center}
\caption{Time for computing the almost-clique separator
decomposition of non-sterile DIMACS instances} 
\label{fig:ppTimeColoring1}
\end{figure}

We compare the size of the largest atom in the almost-clique separator
decomposition only on non-sterile instances.
In Figure~\ref{fig:maxAtomColoring}, 
each instance is plotted with $|\maxatom_{\rm
ours}(G)| / |V(G)|$ as the $x$-coordinate and 
$|\maxatom_{\rm standard}(G)|/ |V(G)|$ as the $y$-coordinate.
We see that, similarly to the case of PACE2017 instances, the performances of
the two implementations in this measure are almost equal for most of the instances. 

We do not include the comparison on the time to compute treewidth given the
almost-clique separator decompositions, because not all of the instances
in this set are solvable in a reasonable amount of time.
The comparisons of the largest atom size alone, however, show that
the quality of the almost-clique separator decompositions produced by the two
implementations are almost equal. 

The effectiveness of the almost-clique separator based 
preprocessing itself is somewhat limited on DIMACS instances 
as more than half of them are sterile. There are, however,
a non-negligible number of instances in this set where the reduction of the
problem size by this approach is dramatic.
In view of the small running time of our heuristic algorithm, 
it seems a good strategy to try this approach in general, unless
there is a strong evidence that the instance at hand is sterile.
\begin{figure}[htbp]
\begin{center}
\includegraphics[width=5in, bb = 0 0 589
552]{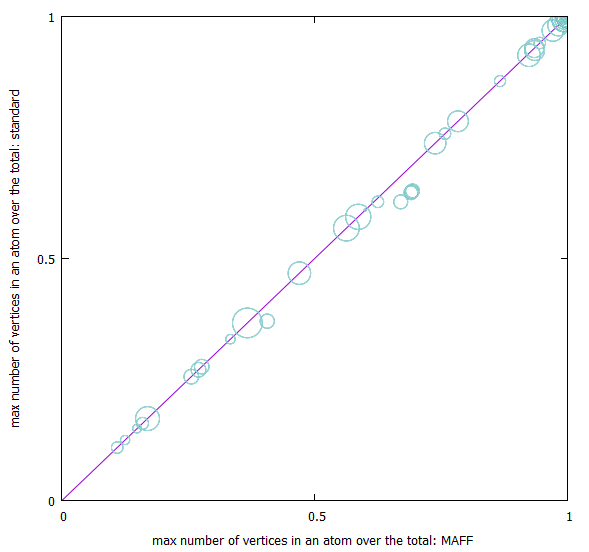}\\
\end{center}
\caption{Relative size of the largest atom in 
the almost-clique separator decomposition of DIMACS graph coloring
instances}
\label{fig:maxAtomColoring}
\end{figure} 

\section{Conclusions}
\label{sec:conc}
We have developed a practically efficient heuristic method of
listing almost-clique minimal separators of a given graph.
Because of this new method, we may now regard the preprocessing
method of Bodlaender and Koster based on almost-clique separator
decompositions as a standard component to be included in any
practical implementations of treewidth algorithms.

As stated in Subsection~\ref{subsec:safe}, this preprocessing
approach is a special case of their approach of using minor-safety as a sufficient 
condition for the safety of separators. 
The problem of deciding if a given separator is minor-safe is NP-complete.
Therefore, we need a good heuristic for this task. Although some heuristic
methods have been developed and successfully used in previous work
\cite{tamaki2019positive, althaus2021ontamaki}, the effect of applying
those heuristics is not sufficiently predictable. 
Although some hard instances become easily solvable
due to the problem reduction by the preprocessing, there are some 
other instances on which the heuristics discover no safe separators after 
expensive combinatorial searches, contributing only to a huge overhead.
It appears difficult to know in advance which would happen.
To turn this more general preprocessing approach
into a preprocessing component as stable as the one developed here, more
research is required.

\bibliography{main}

\end{document}